\documentclass[journal]{IEEEtran}
\usepackage{ulem}
\usepackage{graphicx}
\graphicspath{ {./Figures/} }  
\usepackage{amsmath}
\usepackage[cmintegrals]{newtxmath}
\usepackage{bm,nicefrac}
\usepackage{url}
\usepackage{caption}
\usepackage{bm,nicefrac}
\usepackage{booktabs}

\usepackage{multicol,lipsum}
\usepackage{graphicx}
\usepackage{amsfonts}
\usepackage{subcaption}
\usepackage{amsmath}
\usepackage{lipsum}
\usepackage{stfloats}
\usepackage{stackengine}
\usepackage{float}   
\usepackage{cite}
\usepackage{epstopdf}
\usepackage{csquotes}
\usepackage{relsize}
\usepackage{xcolor}
\usepackage{balance}
\usepackage{enumitem}
\newcommand\mydots{\hbox to 1em{.\hss.\hss.}}
\usepackage[linesnumbered, ruled, vlined]{algorithm2e}
\usepackage{algpseudocode}
\usepackage{hyperref}
\usepackage{scalerel}
\usepackage{pifont}
\newcommand{\xmark}{\ding{55}}%
\usepackage{tikz}
\usepackage[font=small,labelfont=small]{caption}

\newtheorem{proposition}{Proposition}

\usepackage{authblk}
\allowdisplaybreaks
\usepackage{graphicx}
\usepackage{mathrsfs}
\usepackage{caption}

\usepackage[font=footnotesize]{caption}
\usepackage{stackengine}

\makeatletter
\newcommand\makebig[2]{%
\@xp\newcommand\@xp*\csname#1\endcsname{\bBigg@{#2}}%
\@xp\newcommand\@xp*\csname#1l\endcsname{\@xp\mathopen\csname#1\endcsname}%
\@xp\newcommand\@xp*\csname#1r\endcsname{\@xp\mathclose\csname#1\endcsname}%
}
\makeatother
\makebig{biggg} {3.0}
\makebig{Biggg} {3.5}
\makebig{bigggg}{4.0}
\makebig{Bigggg}{6.5}
\newcommand{\bieeeeq}{\begin{IEEEeqnarray}{rcl}}
	\newcommand{\eieeeeq}{\end{IEEEeqnarray}}
\newcommand{\eqtext}[1]{\ensuremath{\stackrel{\text{#1}}{=}}}  
\newcommand{\leqtext}[1]{\ensuremath{\stackrel{\text{#1}}{\leq}}}  

\algnewcommand{\Inputs}[1]{%
  \State \textbf{Inputs:}
  \Statex \hspace*{\algorithmicindent}\parbox[t]{.8\linewidth}{\raggedright #1}
}
\algnewcommand{\Initialize}[1]{%
  \State \textbf{Initialize:}
  \Statex \hspace*{\algorithmicindent}\parbox[t]{.8\linewidth}{\raggedright #1}
}
\algblock{Input}{EndInput}
\algnotext{EndInput}
\algblock{Output}{EndOutput}
\algnotext{EndOutput}

\usepackage{xcolor}
\def\BibTeX{{\rm B\kern-.05em{\sc i\kern-.025em b}\kern-.08em
  T\kern-.1667em\lower.7ex\hbox{E}\kern-.125emX}}

\def \AAb {\mathbf{A}_{\text{b}}}
\def \AAe {\mathbf{A}_{\text{e}}}
\def \BB {\textbf{B}}
\def \Rb {R_{\text{b}}}
\def \Re {R_{\text{e}}}
\def \Csec {C_{\text{sec}}}

\def \EE {\mathbf{E}}
\def \ee {\mathbf{e}}
\def \FFb {\mathbf{F}_{\text{b}}}
\def \FFe {\mathbf{F}_{\text{e}}}
\def \GG {\mathbf{G}}
\def \Hab {\mathbf{H}_{\text{ab}}}
\def \Har {\mathbf{H}_{\text{ar}}}
\def \Hrb {\mathbf{H}_{\text{rb}}}
\def \Hae {\mathbf{H}_{\text{ae}}}
\def \Hre {\mathbf{H}_{\text{re}}}
\def \Hhatb {\hat{\mathbf{H}}_{\text{b}}}
\def \Hhate {\hat{\mathbf{H}}_{\text{e}}}

\def \II {\mathbf{I}}

\def \Na {N_{\text{a}}}
\def \Nb {N_{\text{b}}}

\def \nb {\mathbf{n}_{\text{b}}}
\def \Ns {N_{\text{s}}}
\def \Ne {N_{\text{e}}}
\def \Nx {N_{\text{x}}}
\def \ne {\mathbf{n}_{\text{e}}}
\def \Pdiff {P_{\text{diff}}}

\def \ss {\mathbf{s}}
\def \TT {\mathbf{T}}
\def \That {\hat{\mathbf{T}}}
\def \Topt {\mathbf{T}^{\text{(opt)}}}
\def \UUb {\mathbf{U}_{\text{b}}}
\def \UUe {\mathbf{U}_{\text{e}}}

\def \XX {\mathbf{X}}
\def \yb {\mathbf{y}_{\text{b}}}
\def \ye {\mathbf{y}_{\text{e}}}

\def \betamin {\beta_{\text{min}}}
\def \betaa {\boldsymbol{\beta}}
\def \thetamin {\theta_{\text{min}}}
\def \thetamax {\theta_{\text{max}}}
\def \thetamaxstep {\theta_{\text{max,step}}}
\def \thetaa {\boldsymbol{\theta}}
\def \thetaahat {\hat{\boldsymbol{\theta}}}
\def \thetac {\theta_{\text{c}}}
\def \thetaaopt {\boldsymbol{\theta}^{\text{(opt)}}}
\def \Phii {\boldsymbol{\Phi}}
\def \phii {\boldsymbol{\phi}}
\def \sigmab {\sigma_{\text{b}}^2}
\def \sigmae {\sigma_{\text{e}}^2}
\def \lambdasin {\lambda_{\text{sin}}}
\def \lambdacos {\lambda_{\text{cos}}}
\def \lambdapow {\lambda_{\text{pow}}}
\def \lambdadiv {\lambda_{\text{div}}}
\def \lambdalogtwo {\lambda_{\text{log2}}}
\def \lambdasqrt {\lambda_{\text{sqrt}}}

\def \trace {\text{Tr}}
\def \tr {\text{tr}}

\def \diag {\text{diag}}
\def \Ree {\mathfrak{R}}

\usepackage{tikz,xcolor,hyperref}
\definecolor{lime}{HTML}{A6CE39}
\DeclareRobustCommand{\orcidicon}{%
	\begin{tikzpicture}
	\draw[lime, fill=lime] (0,0) 
	circle [radius=0.16] 
	node[white] {{\fontfamily{qag}\selectfont \tiny ID}};
	\draw[white, fill=white] (-0.0625,0.095) 
	circle [radius=0.007];
	\end{tikzpicture}
	\hspace{-2mm}
}

\foreach \x in {A, ..., Z}{%
	\expandafter\xdef\csname orcid\x\endcsname{\noexpand\href{https://orcid.org/\csname orcidauthor\x\endcsname}{\noexpand\orcidicon}}
}

\begin{document}

\title{Secrecy Rate Maximization in RIS-Assisted MIMO Systems Using a Practical Hardware Model}

\author{Rakesh Ranjan \orcidA{},\IEEEmembership{ Member,~IEEE}, Ahmad Sirojuddin \orcidB{}, \IEEEmembership{ Member,~IEEE}, Manjesh K. Hanawal\orcidC{},\IEEEmembership{ Senior Member,~IEEE}, Himanshu B. Mishra \orcidD{},\IEEEmembership{ Senior Member,~IEEE}, Wan-Jen Huang \orcidE{}, \IEEEmembership{Member,~IEEE}

\thanks{

R. Ranjan, and M. K. Hanawal are with the Department of Industrial Engineering and Operations Research, Indian Institute of Technology, Bombay, Maharashtra 400076, India (e-mail: rakeshranjan2911@gmail.com, mhanawal@iitb.ac.in)

A. Sirojuddin is with the Department of Electrical Engineering, Institut Teknologi Sepuluh Nopember, Surabaya 6011, Indonesia (e-mail: sirojuddin@its.ac.id)

H. B. Mishra is with the Department of Electronics Engineering, Indian Institute of Technology (Indian School of Mines), Dhanbad,  Jharkhand 826004, India (e-mail: himanshu@iitism.ac.in).

Wan-Jen Huang is with the Institute of Communications Engineering, National Sun Yat-sen University, Kaohsiung City 80424, Taiwan (e-mail: wjhuang@faculty.nsysu.edu.tw)
}
}

                           
\maketitle
\begin{abstract}

This study investigates a robust reconfigurable intelligent surface (RIS)-assisted multiple-input multiple-output (MIMO) system for secure wireless communication, in which a multi-antenna transmitter (Alice) sends confidential messages to a multi-antenna receiver (Bob) in the presence of an eavesdropper (Eve). Unlike idealized models, the reflecting elements (REs) of the RIS are assumed to possess inherent electrical resistance, introducing a practical non-ideal effect often neglected in prior research. The aim of the study is to maximize the secrecy rate of the MIMO system under perfect knowledge of the channel state information (CSI). To achieve this, the secrecy rate maximization problem is formulated and solved using a low-complexity joint optimization framework based on an adaptive projected gradient method (PGM), which simultaneously updates both the transmit precoding matrix and the RIS phase shifts. Solving the exact problem is computationally complex. Thus, a simplified variant is further introduced that maximizes the channel power difference rather than the exact secrecy rate. The simulation results show that this approximation yields a secrecy rate close to the true optimum while significantly reducing the computational cost. In addition, the proposed PGM with an adaptive step size initialization and control mechanism substantially improves the secrecy rate and reduces the computational time compared to the conventional fixed step size PGM. Overall, the simulation results confirm the effectiveness of the proposed PGM and demonstrate that adopting a practical RIS model is essential for establishing secure RIS-assisted MIMO communication links, especially under varying RE resistance values. 
  
\end{abstract}
\begin{IEEEkeywords}
Secrecy rate maximization, channel power difference maximization, MIMO, practical model reconfigurable intelligent surface, physical layer security.
\end{IEEEkeywords}

\section{Introduction}

\IEEEPARstart {W}{IRELESS} security is a crucial aspect of fifth-generation ($5$G) or beyond $5$G (B$5$G) networks. Traditionally, security in wireless transmission has relied on cryptographic techniques. However, these methods often introduce significant overheads and face challenges related to key distribution and management, which can compromise their reliability due to the inherent broadcast nature of the wireless medium~\cite{DBLP:journals/tvt/CumananDSTL14}. Consequently, secure transmission approaches that exploit the physical characteristics of wireless channels have gained increasing attention. Unlike traditional cryptographic methods, physical layer security (PLS) methods offer notable advantages such as lower complexity and the potential for keyless secure communication \cite{DBLP:journals/bstj/Shannon49a}. The performance of PLS methods is typically evaluated using the secrecy rate, which is quantified as the highest achievable rate at which the transmitter can securely and reliably communicate with the intended user (Bob) while preventing an eavesdropper (Eve) from intercepting the transmitted information~\cite{DBLP:journals/bstj/Wyner75a}. Multi-antenna techniques have been extensively employed to further improve the security of transmissions by leveraging the availability of additional spatial degrees of freedom. The secrecy rate analysis of multiple-input multiple-output (MIMO) channels in the presence of Eve has been extensively investigated in~\cite{DBLP:journals/tcom/LoykaC15,DBLP:journals/tcom/MukherjeeOT21,DBLP:journals/tsp/HanifTJG14}, with the findings providing valuable insights into the potential of spatial processing to enhance PLS.

Reconfigurable intelligent surface (RIS) has emerged as a promising technology to enhance both the spectral efficiency and PLS in B$5$G wireless communication networks. An RIS consists of a large array of passive reflecting elements (REs) designed as an impedance-based meta-surface embedded with reconfigurable electromagnetic components. These components allow each RE to reflect the incident waves with a precisely controllable phase shift, thereby enabling the reflected signal to be tailored as required to obtain the desired channel response\cite{DBLP:journals/comsur/GongLHNSKL20,DBLP:journals/tcom/WuZZYZ21,DBLP:journals/cm/PanRWKECRHWSYH21,DBLP:journals/ojcs/GuoLMALAW24}. For instance, by intelligently adjusting the phase shifts, the RIS can enhance the signal quality at the intended user (IU) while simultaneously reducing the signal strength at the non-intended user (NU). 
As a result, RIS plays a crucial role in enhancing the PLS of  wireless communication systems~\cite{DBLP:IRS-application1,DBLP:IRS-application2, DBLP:IRS_application3}. 
Recent studies have therefore focused on optimizing RIS configurations to maximize the secrecy rate performance using various design and optimization strategies~\cite{DBLP:journals/jsac/YuXSNS20,DBLP:journals/ojcs/AlmohamadTAFKHA20, DBLP:journals/tifs/ZhangDSAN21, DBLP:journals/corr/abs-2103-16696}. Collectively, these efforts highlight the central role of RIS in realizing secure and efficient wireless links through environment-adaptive signal control.

\subsection{Related Works}

This section provides a brief overview of related studies focused on improving the achievable rate and PLS in RIS-assisted communication systems. A summary of the key contributions of these studies is presented in Table~\ref{tab:related_work}. The authors in~\cite{DBLP:journals/wcl/SirojuddinPH22,DBLP:journals/wcl/NingCCF20} maximized the achievable rate in RIS-aided MIMO systems by obtaining the optimal values of the RIS phase shifts using either the dimension-wise sinusoidal maximization (DSM)~\cite{DBLP:journals/wcl/SirojuddinPH22} or the sum-path gain maximization (SPGM)~\cite{DBLP:journals/wcl/NingCCF20} method. Rajan et al. \cite{DBLP:journals/icl/RanjanBMM23} analyzed the achievable rate maximization problem in RIS-aided orthogonal frequency division multiplexing (OFDM) systems and proposed a gradient ascent technique with a fixed step size to determine the optimal value of the RIS phase shifts.

 Chu et al.~\cite{DBLP:IRS_MIMO_SR1} addressed the secrecy rate maximization problem in RIS-assisted systems by jointly designing the precoding matrix and RIS phase shifts using the block coordinate descent (BCD) and maximization–minimization (MM) algorithms. The authors in \cite{DBLP:IRS-MISO_secrecyrate_ncc} maximized the secrecy rate of RIS-aided multiple-input-single-output (MISO) systems by solving the real-valued optimization problem using a fixed step-size gradient ascent method. Li et al.~\cite{DBLP:IRS_MIMO_AIDED_Vechicular_secrecy_rate} considered a RIS-assisted MIMO vehicular network consisting of a multi-antenna Alice, a multi-antenna IU, and a multi-antenna Eve and attempted to maximize the secrecy rate of the system using semi-definite programming (SDP) and Riemannian manifold optimization (RMO) to design the active beamforming and phase shifts of the RIS elements. Shen et al.~\cite{DBLP:IRS-MISO_secrecy_rate_maximization} proposed an alternating optimization method to jointly design the transmit covariance matrix and RIS phase shift matrix to enhance the secrecy rate in a RIS-aided multi-antenna communications system.
  Several studies have investigated the secrecy outage probability (SOP) in RIS-assisted systems. For instance, Yang et al. \cite{DBLP:journals/tvt/YangYXHTR20} analyzed the SOP in RIS-aided SISO systems in which the NU locations were fixed, while Wang et al. \cite{DBLP:journals/icl/WangWWCY25} considered the case where the NUs were randomly located. Xiu et al.~\cite{DBLP:journals/icl/XiuZSZ21} investigated the secrecy rate maximization problem in RIS-assisted millimeter-wave (mmWave) communication systems under practical hardware constraints. Pala et al.~\cite{DBLP:journals/twc/PalaTKSLS24} extended the analysis to multiple RIS, multiple IUs, and multiple NUs in a multiple user (MU)-MIMO downlink scenario, addressing the secrecy rate maximization problem under discrete-phase constraints for the RIS elements.
 Finally, Wijewardena et al. ~\cite{DBLP:journals/icl/WijewardenaSHAE21} considered the secrecy rate maximization problem in RIS-aided SISO systems with two-way communication.

 \subsection{Motivations and Contributions}

 As indicated in Table~\ref{tab:related_work}, existing studies ~\cite{DBLP:journals/jsac/YuXSNS20, DBLP:journals/ojcs/AlmohamadTAFKHA20,DBLP:journals/tifs/ZhangDSAN21,  DBLP:journals/corr/abs-2103-16696,DBLP:IRS_MIMO_SR1, DBLP:IRS-MISO_secrecyrate_ncc, DBLP:IRS_MIMO_AIDED_Vechicular_secrecy_rate, DBLP:IRS-MISO_secrecy_rate_maximization,DBLP:journals/wcl/SirojuddinPH22,DBLP:journals/wcl/NingCCF20,DBLP:journals/icl/RanjanBMM23, DBLP:journals/tvt/YangYXHTR20,DBLP:journals/icl/WangWWCY25,DBLP:journals/icl/XiuZSZ21,DBLP:journals/twc/PalaTKSLS24,DBLP:journals/icl/WijewardenaSHAE21} primarily focus on designing the optimal phase shifts in RIS-assisted MIMO systems to maximize the secrecy rate. These studies operate under the assumption of an ideal (lossless) model, in which the magnitude of the RIS reflection coefficients is considered to be constant, irrespective of the phase shifts induced by the REs. However, this assumption does not hold in real-world scenarios, since the REs possess electrical resistance, and thus the magnitudes of the RIS reflection coefficients are not constant but vary with the phase shift~\cite{DBLP:journals/tcom/CaoYTP24}. In other words, although the ideal RIS model greatly simplifies the analysis, it fails to capture the true practical behavior of the RIS hardware.
 Motivated by this fact, the present study addresses the secrecy rate maximization problem in practical model RIS-assisted MIMO systems where the magnitude of the phase shift induced by the RE resistance must be explicitly considered.  
 To the best of the authors' knowledge, research on practical RIS models aimed at enhancing the PLS in RIS-assisted MIMO systems remains in an early stage. Motivated by this limitation, the present study examines how the adoption of a practical RIS model affects the secrecy rate compared with the ideal RIS model and proposes a low-complexity algorithm to solve the corresponding secrecy rate maximization problem. The contributions of this study are as follows:
\begin{itemize}
    \item Existing works consider an ideal RIS model with unit-modulus reflection coefficients. In contrast, the present study considers a practical RIS model in which the magnitude of each reflection coefficient depends on the phase shift of the corresponding RE. The model explicitly incorporates the inherent electrical resistance of the REs, resulting in a more realistic and accurate characterization of the RIS behavior in real-world systems. 
    \item The secrecy rate maximization problem is formulated for a practical RIS model, and the secrecy rate of the RIS-assisted MIMO system is obtained through the joint optimization of the transmit precoding matrix ($\TT$) and RIS phase shift ($\thetaa$). 
   \item In the objective function, $\TT$ and $\thetaa$ are coupled, and the magnitude of the RIS reflection coefficients has a nonlinear dependency on the phase shift. Accordingly, a low-complexity projected gradient method (PGM) is proposed in which $\TT$ and $\thetaa$ are simultaneously updated in each iteration.
    Note that although the PGM algorithm is well-known, its application to a specific problem remains challenging since the gradient calculation and projection are typically specific to the problem under investigation.
    \item Solving the exact secrecy rate maximization problem is computationally intensive. To address this issue, an alternative formulation is introduced that maximizes the channel power difference between the IU (Bob) and the NU (Eve).The proposed algorithm efficiently solves this alternative formulation with a substantially reduced computational complexity.
    \item The proposed optimization algorithms are general in the sense that they can be applied to both ideal (lossless) and practical (resistive) RIS models since they take into account the effect of the resistance parameter $R$ on the RIS characteristics, and the value of $R$ can be set to either zero (the ideal case) or a finite value (the practical case) as required. Therefore, the proposed algorithms can be applied to a wide range of system designs.
    \item  The simulation results demonstrate that the proposed PGM algorithms, when applied utilizing a practical RIS model, outperform the ideal RIS model, particularly in scenarios with randomly distributed RIS phase shifts. Moreover, due to their adaptive step-size initialization and control mechanism, the proposed algorithms achieve a higher secrecy rate and substantially lower computational cost than conventional fixed step-size PGM schemes.
     
    \end{itemize}
    
\subsection{Notations and Organization of Paper}
 The notations used throughout this paper are presented in Table~\ref{tab:notations}. Section II describes the system model and problem formulation for maximizing the secrecy rate in practical RIS-assisted MIMO systems. Section III describes the proposed low-complexity PGM algorithm to jointly solve the optimal values of the precoding matrix and RIS phase shifts. Section IV presents and discusses the simulation results. Finally, Section V summarizes the main findings of this study and presents some brief concluding remarks.

\begin{table*}[t!]
    \centering
    \renewcommand{\arraystretch}{1.25}
    \caption{Review of Relevant Works on RIS }
    \label{tab:related_work}

    \begin{tabular}{|p{1.4cm}|p{2cm}|p{2.6cm}|p{9.8cm}|}
    \hline
         \textbf{Reference}& \textbf{Ideal Model RIS} & \textbf{Practical Model RIS }   & \textbf{Performance metric/Study}\\ \hline
  \cite {DBLP:journals/wcl/SirojuddinPH22,DBLP:journals/wcl/NingCCF20}  & \checkmark  & \xmark   & Achievable rate maximization of RIS-aided MIMO systems\\\hline
        \cite{DBLP:journals/icl/RanjanBMM23} & \checkmark & \xmark & Achievable rate maximization of RIS-aided OFDM systems \\ \hline      
\cite {DBLP:IRS_MIMO_SR1} &  \checkmark & \xmark &  Maximization of secrecy rate of RIS-aided MIMO systems with artificial noise\\ \hline
        \cite {DBLP:IRS-MISO_secrecyrate_ncc} &  \checkmark  & \xmark & Maximization of secrecy rate of RIS-aided MISO systems\\ \hline
 \cite {DBLP:IRS_MIMO_AIDED_Vechicular_secrecy_rate} &\checkmark & \xmark &   Enhanced physical layer security in vehicular MIMO systems using RIS \\\hline
         \cite { DBLP:IRS-MISO_secrecy_rate_maximization}  & \checkmark  & \xmark  & RIS-enabled secure transmission in multi-antenna wireless networks \\\hline
       
        \cite{DBLP:journals/tvt/YangYXHTR20} & \checkmark &\xmark & Secrecy outage probability (SOP) analysis of RIS-aided MIMO systems for fixed location non-intended users\\ \hline
        \cite{DBLP:journals/icl/WangWWCY25} & \checkmark & \xmark &  SOP analysis of RIS-aided MIMO systems for randomly located non-intended users\\ \hline
        \cite{DBLP:journals/icl/XiuZSZ21} & \checkmark & \xmark &   Maximization of secrecy rate of RIS-assisted millimeter-wave (mmWave) communication system equipped with low-resolution digital-to-analog converters   \\ \hline
        \cite{DBLP:journals/twc/PalaTKSLS24} & \checkmark & \xmark &   Maximization of sum secrecy rate of secure RIS-assisted hybrid beamforming system with low resolution phase shifters  \\\hline
        \cite{DBLP:journals/icl/WijewardenaSHAE21} & \checkmark & \xmark &  Secrecy rate maximization of RIS-assisted two-way communication systems \\ \hline
          \textbf{Our work}  & \checkmark  & \checkmark & \textbf{Maximization of secrecy rate of practical model RIS-assisted MIMO systems}\\\hline
          
    \end{tabular}
\end{table*}

\begin{table*}[t!]
\renewcommand{\arraystretch}{1.25}
    \centering
    \caption{List of Notations}
    \label{tab:notations}
    \begin{tabular}{|l|l|l|l|l|l|l|}
        \hline
         \textbf{Notations} & \textbf{Representation} & \textbf{Notations} & \textbf{Representation} & \textbf{Notations}  & \textbf{Representation}\\\hline
          
$\mathfrak{R}$\{$\cdot$\} &  Real part  & $\in $ & Belongs to &  $\mathcal{\mathbb{C}}$ & Complex numbers\\\hline 
$\mathbb{E}\{\cdot\}$ & Expectation operator & $ \forall$ & Stands for all  & $\diag \left( \cdot \right)$ & Diagonalization operator \\\hline
$(\cdot)^{-1}$ & Matrix inverse  & $\mathcal{CN}(\cdot\; ,\cdot)$ &Complex Gaussian distribution & $(\cdot)^H$ & Hermitian  \\\hline
 $(\cdot)^\top$ & Transpose   &   $\left\| \cdot \right\|$ & Frobenius norm & $\left\lvert\cdot\right\rvert$ & Determinant \\\hline 
$\Delta (\cdot)$ & Del operator & $\triangleq$ &  Defined as & $\approx$ & Approximation  \\\hline
$(\cdot)^*$ & Conjugate operator &  $\circ $ & Hadamart product & $\mathbb{R}$ & Real numbers\\\hline   
\end{tabular}
\end{table*}

\section{System Model and Problem Formulation}
\label{System MOdel}
\begin{figure}[t!]
		\centering
\includegraphics[width=0.8\linewidth]{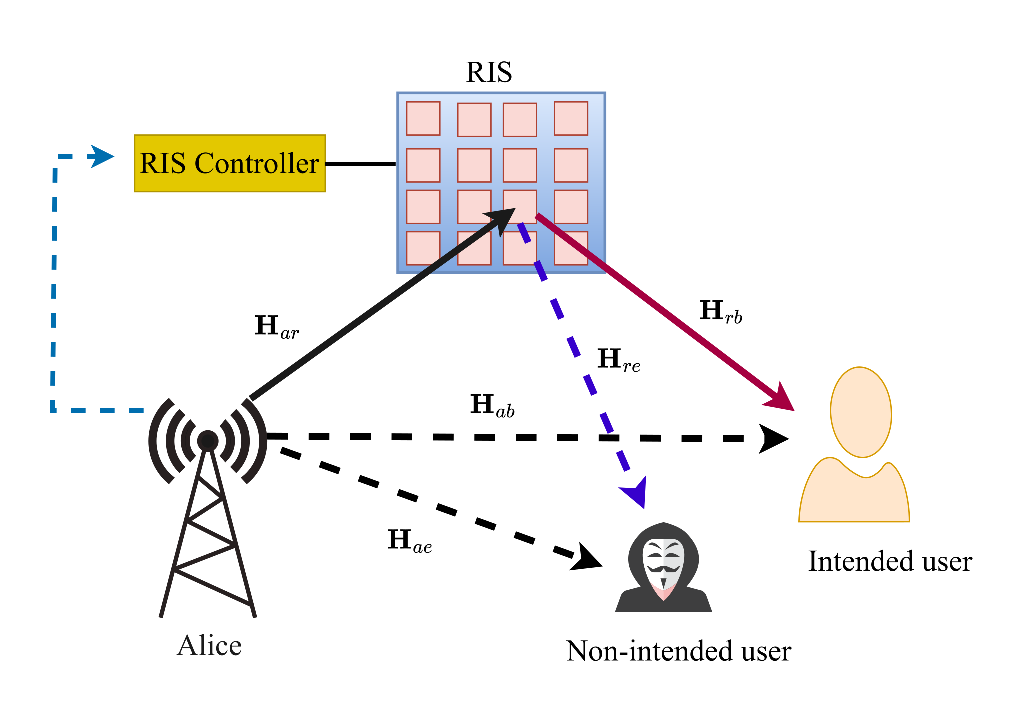}
		 \caption{{ System model of RIS-assisted MIMO systems}.}
			\label{fig:systemmodel}
	\end{figure}

\subsection{RIS-Assisted MIMO Systems}
This study considers a practical RIS-assisted MIMO wireless communication system in which a multi-antenna transmitter (Alice) communicates with an intended user (Bob) in the presence of an eavesdropper (Eve). Alice, Bob, and Eve are equipped with $\Na$, $\Nb$, and $\Ne$ antennas, respectively, and the transmission from Alice to Bob is assisted by an RIS consisting of $M$ passive REs, as shown in Fig. \ref{fig:systemmodel}. The corresponding channel state information (CSI) is given as follows: the channel between Alice and Bob (IU) is denoted as $\Hab \in \mathbb{C}^{\Nb \times \Na}$, between Alice and RIS as $\Har \in \mathbb{C}^{M \times \Na}$, between RIS and IU as $\Hrb \in \mathbb{C}^{\Nb \times M}$, between Alice and Eve (NU) as $\Hae \in \mathbb{C}^{\Ne \times \Na}$, and between RIS and NU as $\Hre \in \mathbb{C}^{\Ne \times M}$. Alice transmits a parallel data stream $\ss$ of size $\Ns$ with $\mathbb{E} \left\{ \ss \ss^H \right\} = \II_{\Ns}$. Before the data are transmitted to IU, a linear precoding matrix  $\TT \in \mathbb{C}^{\Na \times \Ns}$ is applied to exploit the available spatial multiplexing gain of the MIMO system.
For the system model shown in Fig.~\ref{fig:systemmodel}, the signals received at IU and NU are denoted as $\yb \in \mathbb{C}^{\Nb}$ and  $\ye \in \mathbb{C}^{\Ne}$, respectively, and are written as
\begin{equation} \label{eq:yb}
    \yb = \left( \Hab + \Hrb \Phii \Har \right) \TT \ss + \nb,
\end{equation}
\begin{equation} \label{eq:ye}
    \ye = \left( \Hae + \Hre \Phii \Har \right) \TT \ss + \ne,
\end{equation}
where $\Phii = \diag \left( \phi_1, \dots, \phi_M \right) \in \mathbb{C}^{M \times M}$; $\phi_m \in \mathbb{C}$ is the reflection coefficient of the $m$th RE; and $\nb \in \mathbb{C}^{\Nb}$ and $\ne \in \mathbb{C}^{\Ne}$ are the zero-mean complex Gaussian noise components at IU and NU with variances $\sigmab$ and $\sigmae$, respectively.

For simplicity, let $\Hhatb \triangleq \Hab + \Hrb \Phii \Har$ and $\Hhate \triangleq \Hae + \Hre \Phii \Har$ be the effective channels for the Alice-IU and Alice-{NU} links, respectively. In addition, let the achievable rates of the IU and NU be denoted as $\Rb$ and $\Re$, respectively, and be written as 
\begin{equation} \label{eq:Cb}
    \Rb = \log_2 \det \left( \II_{\Nb} + \Hhatb \TT \TT^H  \Hhatb^H / \sigmab \right).
\end{equation}
\begin{equation} \label{eq:Ce}
    \Re = \log_2 \det \left( \II_{\Ne} + \Hhate \TT \TT^H  \Hhate^H / \sigmae \right).
\end{equation}

\subsection{Practical RIS Phase Shift Model}

The reflection coefficient $\phi_m$ of the $m$th RE of the RIS  can be expressed as $\phi_m = \beta_m e^{j \theta_m}$, where $\beta_m \in \mathbb{R}$ and $\theta_m \in \mathbb{R}$ are the reflection magnitude and phase shift of the RE, respectively. This study considers a practical RIS phase shift model. Consequently, the reflection magnitude, $\beta_m$, is a function of the phase shift. 
As discussed in~\cite{DBLP:journals/tcom/AbeywickramaZWY20}, the reflection coefficient $\phi_m$ of the $m$th RE can be written as 
\begin{equation} \label {eq:practical_model1}
        \phi_m= \frac{Z_m(C_m,R_m)-Z_0}{Z_m(C_m,R_m)+Z_0}, \quad \forall m \in \mathcal{M},
\end{equation}
where $Z_0$ is the free-space impedance, and ${Z_m(C_m,R_m)}$ is the $m$th RE impedance, which can be expressed as
\begin{equation}\label {eq:practical_model2}
        Z_m(C_m,R_m)= \frac{j2\pi fL_1(j2\pi L_2+\frac{1}{j 2\pi f C_m}+R_m)}{j2\pi fL_1+(j2\pi L_2+\frac{1}{j 2\pi f C_m}+R_m)}, \, \forall m \in \mathcal{M},
\end{equation}
where $f$, $L_1$, $L_2$, $C_m$, and $R_m$, represent the signal frequency, inductance of the bottom layer, inductance of the top layer, tunable equivalent capacitance, and loss-related resistance in the equivalent circuit, respectively~\cite{DBLP:journals/icl/CaiLLL20}.

Following~\cite{DBLP:journals/tcom/AbeywickramaZWY20}, (\ref{eq:practical_model1}) and (\ref{eq:practical_model2}) can be restructured to obtain the following relationship between $\beta_m$ and  $\theta_m$: 
\begin{subequations} \label{eq:practical model3}
\begin{align} 
    &\beta_m \left( \theta_m \right)= \left( 1 - \betamin \right) \left( \frac { \sin \left( \theta_m - \tilde{\theta} \right) + 1}{2} \right)^{\alpha} + \betamin,\, \forall m  \\
    &\thetamin \leq \theta_m \leq \thetamax, \quad \forall m \in \mathcal{M}, \label{eq:theta boundary}
\end{align}
\end{subequations}
where $\betamin$, $\tilde{\theta}$, and $\alpha$ are constants representing the RE characteristic, with values determined by the specific electrical circuit implementation within the element. In contrast to \cite{DBLP:journals/tcom/AbeywickramaZWY20} which assumes a full $2 \pi$ phase shift range capability, the values of $\theta_m$ for the practical REs considered in the present study are restricted by $\thetamin \leq \theta_m \leq \thetamax$ with $\thetamin > -\pi$ and $\thetamax < \pi$. The relationship between $\beta_m$ and $\theta_m$, as modeled in (\ref{eq:practical model3}), is depicted in Fig. \ref{fig:practical model}.

\begin{figure}
    \centering
    \includegraphics[width=0.5\linewidth]{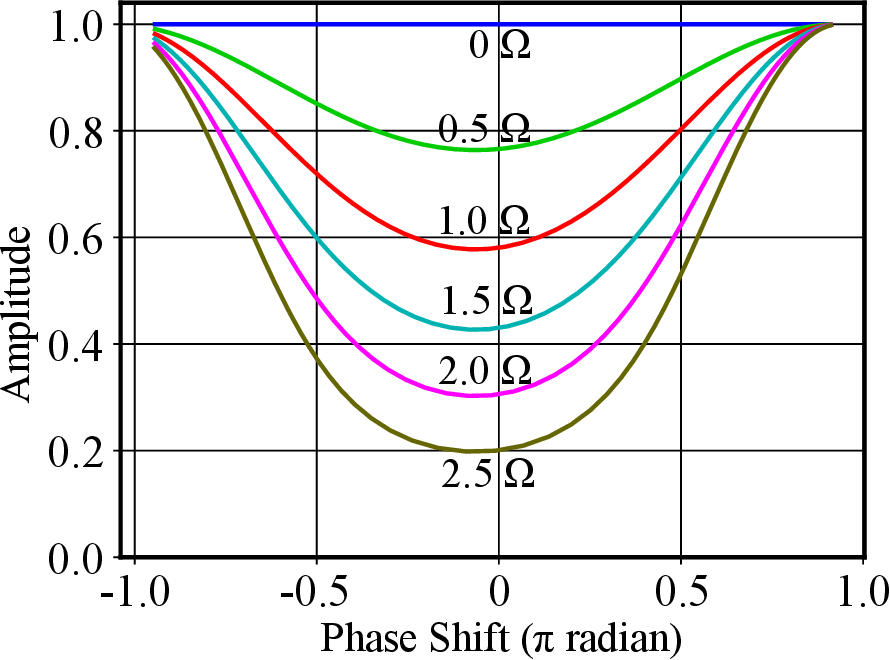}
    \caption{Relationship between $\beta_m$ and $\theta_m$ in practical RIS system as modeled by (\ref{eq:practical model3}) for different values of $R$.}
    \label{fig:practical model}
\end{figure}

\subsection{Problem Formulation}

The objective of this paper is to maximize the secrecy rate, i.e., $\Csec \triangleq (\Rb - \Re)^+$\footnote{Here, the notation $\{.\}^+=\max (0,.)$ denotes that the secrecy rate should always be positive. When the problem~(\ref{prob:main:obj}) is solved optimally, then the $\max(0,.)$ operator from the secrecy rate equ.~(\ref{prob:main:obj}) can be omitted without affecting the optimality~\cite{DBLP:journals/tcom/TaghizadehNMF19, DBLP:journals/twc/PalaTKSLS24}. } for the practical RIS-aided MIMO system shown in Fig.~\ref{fig:systemmodel}. 
The secrecy rate maximization problem can be formulated as follows:
\begin{subequations} \label{prob:main}
\begin{align} 
     \max_{\TT, \thetaa} \quad & \log_2 \det \left( \II_{\Nb} + \Hhatb \TT \TT^H  \Hhatb^H / \sigmab \right) \nonumber \\
    & - \log_2 \det \left( \II_{\Ne} + \Hhate \TT \TT^H  \Hhate^H / \sigmae \right), \label{prob:main:obj} \\
    \text{s.t.} \quad & \trace \left( \TT \TT^H \right) \leq P, \label{prob:main:const power} \\
    & \phi_m = \beta_m \, e^{j \theta_m}, \forall m \in \mathcal{M}, \label{prob:main:phim}\\
    & \beta_m = \left( 1 - \betamin \right) \left( \frac { \sin \left( \theta_m - \tilde{\theta} \right) + 1}{2} \right)^{\alpha} + \betamin, \forall m,\label{prob:main:betam}\\
    & \thetamin \leq \theta_m \leq \thetamax, \forall m \in \mathcal{M}. \label{prob:main:bound theta}
\end{align}
\end{subequations}
 In problem (\ref{prob:main}), constraint (\ref{prob:main:const power}) ensures that the total transmit power at Alice does not exceed the maximum allowed power $P$. Furthermore, (\ref{prob:main:phim}) states that, in the considered practical model, the reflection coefficients of the RE elements are functions of both the reflection magnitude and the phase shift. (\ref{prob:main:betam}) shows that the reflection magnitude depends nonlinearly on the phase shift of the corresponding RE. Finally, (\ref{prob:main:bound theta}) limits the range of the allowable phase shift $\thetaa_m$ for the $m$th RE. The main challenges encountered in solving the formulated optimization problem in (\ref{prob:main}) are described as follows:

\begin{enumerate}
  \item  Due to the imposed constraints and the non-convex nature of the objective function, the formulated optimization problem~(\ref{prob:main}) is challenging to solve. In addition, the precoding matrix $\TT$ and phase shift $\thetaa$ are coupled in the formulated problem, which adds to the complexity of the optimization process (\ref{prob:main}).
  
  \item The practical RIS model introduces a non-linear mapping between the phase shift $\theta_m$ and the reflection magnitude $\beta_m$ for the $m$th RE of the RIS, as shown in Fig.~\ref{fig:practical model}. Thus, the problem (\ref{prob:main}) is more complex than the ideal ( lossless) RIS model~\footnote{In ideal ( lossless) RIS model reflection magnitude i,e., $\beta_m=1 \forall m \in \mathcal{M}$.   }.
  
  \item Currently, no standard mathematical optimization technique is available to efficiently obtain a globally optimal solution for a non-convex optimization problem such as that in (\ref{prob:main}). Accordingly, the present study proposes a low-complexity optimization algorithm designed to obtain an efficient suboptimal solution by jointly designing the precoding matrix $\TT$ and RIS phase shift $\thetaa$. 
\end{enumerate}
 
The following section addresses the challenges outlined in $(1)$--$(3)$ and introduces the proposed low-complexity optimization algorithm. The proposed algorithm aims to achieve an effective trade-off between the computational cost and the secrecy-rate performance, demonstrating substantial performance improvements over existing state-of-the-art methods.

\section{Joint Precoding Matrix and Phase Shift Optimization} \label{sec:join opt}
In problem (\ref{prob:main}), although $\TT$ and $\thetaa$ are coupled in the objective function, they are separated in the constraints. This study proposes a joint optimization of the two parameters by simultaneously updating $\TT$ and $\thetaa$ in each iteration using PGM \cite{Perovic21Achievable}. Suppose that the values of $\TT$ and $\thetaa$ in the previous iteration of the optimization process are denoted as $\TT^{(i_1-1)}$ and $\thetaa^{(i_1-1)}$, respectively. By using PGM, in each iteration,
\begin{enumerate}
    \item $\TT^{(i_1-1)}$ and $\thetaa^{(i_1-1)}$ are shifted toward their gradient directions, yielding $\That$ and $\thetaahat$.
    \item $\That$ and $\thetaahat$ are then projected in accordance with their respective constraints, giving $\TT^{(i_1)}$ and $\thetaa^{(i_1)}$, respectively.
\end{enumerate}

In particular, in iteration $i_1$, $\That$ is calculated as
\begin{equation} \label{eq:T hat}
    \That = \TT^{(i-1)} + \upsilon_t \frac{\partial \Csec}{\partial \TT^*},
\end{equation}
where $\upsilon_t$ is the step-size, and the gradient of $\Csec$ with respect to $\TT$ is expressed as
\begin{align} \label{eq:dCsec dT}
    \frac{\partial \Csec}{\partial \TT^*} =& \frac{1}{\ln 2} \left[ \frac{1}{\sigmab} \Hhatb^H \AAb^{-1} \Hhatb \TT - \frac{1}{\sigmae} \Hhate^H \AAe^{-1} \Hhate \TT \right],
\end{align}
where $\AAb \triangleq \II_{\Nb} + \Hhatb \TT \TT^H \Hhatb^H / \sigmab$ and $\AAe \triangleq \II_{\Ne} + \Hhate \TT \TT^H \Hhate^H / \sigmae$. Note that $\That$ in (\ref{eq:T hat}) may violate the power constraint (\ref{prob:main:const power}), i.e., $\trace \left( \That \That^H \right) > P$. Under this condition, $\That$ needs to be projected to a set that fulfills the power constraint (\ref{prob:main:const power}). It is easily shown that this projection can be performed by the following equation:
\begin{equation} \label{eq:T project}
    \TT^{(i_1)} = \sqrt{\frac{P}{\trace \left( \That \That^H\right)}} \That.
\end{equation}
The procedure for updating the precoding matrix $\TT$ is illustrated in Fig. \ref{fig:precoderprojection}.
\begin{figure}
    \centering
    \includegraphics[width=0.45\linewidth]{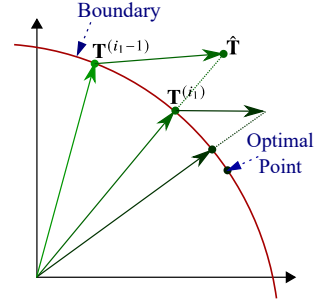}
    \caption{Illustration of precoder $\TT$ projection. Given $\TT^{(i_1-1)}$, $\That$ is obtained by shifting $\TT^{(1_1-1)}$ toward the direction of $\partial \Csec / \partial \TT^*$. Then, $\TT^{(i_1)}$ is obtained by projecting $\That$ to a set that satisfies (\ref{prob:main:const power}).}
    \label{fig:precoderprojection}
\end{figure}

The secrecy rate, $\Csec$, can be obtained as a closed-form function w.r.t. $\thetaa$ by substituting (\ref{prob:main:phim}) and (\ref{prob:main:betam}) into (\ref{prob:main:obj}). As in the $\TT$ update procedure, the solution of $\thetaahat$ in iteration $i_1$ can be calculated as
\begin{equation} \label{eq:thetaa hat}
    \thetaahat = \thetaa^{(i_1-1)} + \upsilon_{\theta} \frac{\partial \Csec}{\partial \thetaa},
\end{equation}
where $\upsilon_{\thetaa}$ is the step-size, and the gradient of $\Csec$ with respect to $\thetaa$ is given by
\begin{align} \label{eq:dCsec dthetaa}
    \frac{\partial \Csec}{\partial \thetaa} =& \frac{2}{\ln 2} \Ree \Bigg\{ \frac{\partial \phii}{\partial \thetaa} \nonumber \circ \diag \bigg[ \Har \TT \TT^H \Big( \Hhatb^H \AAb^{-1} \Hrb / \sigmab \\
    & - \Hhate^H \AAe^{-1} \Hre / \sigmae \Big) \bigg] \Bigg\},
\end{align}
where $\circ$ denotes the Hadamard product, and the operator $\diag[\XX]$ creates a vector whose elements are the diagonal of $\XX$. The detailed derivation for (\ref{eq:dCsec dT}) is presented in Appendix \ref{sec:derivations}. Meanwhile, $\partial \phii / \partial \thetaa \in \mathbb{C}^M$ is a vector in which the $m$th entry, \textit{i.e.,} $\partial \phi_m / \partial \theta_m$, is given in \cite{DBLP:IRS-MIMO_capacity_siro} as
\begin{align} \label{eq:dphim dthetam}
    \frac{\partial \phi_m}{\partial \theta_m} =& \frac{\left( 1 - \betamin \right) \alpha}{2^{\alpha}} \left[ \sin \left( \theta_m - \tilde{\theta} \right) + 1 \right]^{\alpha-1} \nonumber \\
    & \cdot \cos \left( \theta_m - \tilde{\theta} \right) e^{j \theta_m} + j \beta_m e^{j \theta_m}.
\end{align}

The entries of $\thetaahat$ in (\ref{eq:thetaa hat}) may violate the boundary constraint in (\ref{prob:main:bound theta}) since they can have any real value. In this situation, $\thetaahat$ must be projected onto a set that satisfies the boundary constraint (\ref{prob:main:bound theta}). This projection consists of several steps, as described below. First, $\hat\theta_m$ needs to be projected to the range $-\pi$ to $\pi$ by executing $\hat{\theta}_m \leftarrow  \left( (\hat{\theta}_m + \pi) \mod{2 \pi} \right) - \pi$. Secondly, if $\hat\theta_m$ falls within the feasible set, \textit{i.e.}, $\thetamin \leq \hat\theta_m \leq \thetamax$, then this value is set as the projected angle, \textit{i.e.}, $\theta_m^{(i_1)} = \hat\theta_m$. Otherwise, if $\hat\theta_m < \thetamin$ or $\hat\theta_m > \thetamax$, the projected angle is set to be either $\theta_m^{(i_1)} = \thetamin$ or $\theta_m^{(i_1)} = \thetamax$. The selection between these two boundary values is determined using a simple equation, as illustrated in Fig. \ref{fig:angleprojection}. For ease of explanation, let the following auxiliary variable $\breve{\theta}_m$ be introduced:
\begin{equation} \label{eq:theta breve}
    \breve{\theta}_m = \begin{cases}
        \hat{\theta}_m & \text{if } \hat{\theta}_m \geq 0 \\
        \hat{\theta}_m + 2 \pi & \text{if } \hat{\theta}_m < 0
    \end{cases}
    \quad \forall m.
\end{equation}
In other words, $\breve{\theta}_m$ is equivalent to $\hat{\theta}_m$, with a value ranging from $0$ to $2 \pi$. Furthermore, let us define $\thetac = \left( \left[ \thetamin + 2 \pi \right] + \thetamax  \right) / 2$ as the center point between $\thetamax$ and $\thetamin$, as illustrated in Fig. \ref{fig:angleprojection}. Hence, the projection is performed by executing the following equation:
\begin{equation} \label{eq:theta m project}
    \theta_m^{(i)} = \begin{cases} 
    \hat{\theta}_m & \text{if } \thetamin \leq \hat{\theta}_m \leq \thetamax, \\
    \thetamin & \text{if the first case is not satisfied and } \breve{\theta}_m \geq \thetac, \\
    \thetamax & \text{if the first case is not satisfied and } \breve{\theta}_m < \thetac.
\end{cases}
\end{equation}

\begin{figure}
    \centering
    \includegraphics[width=0.52\linewidth]{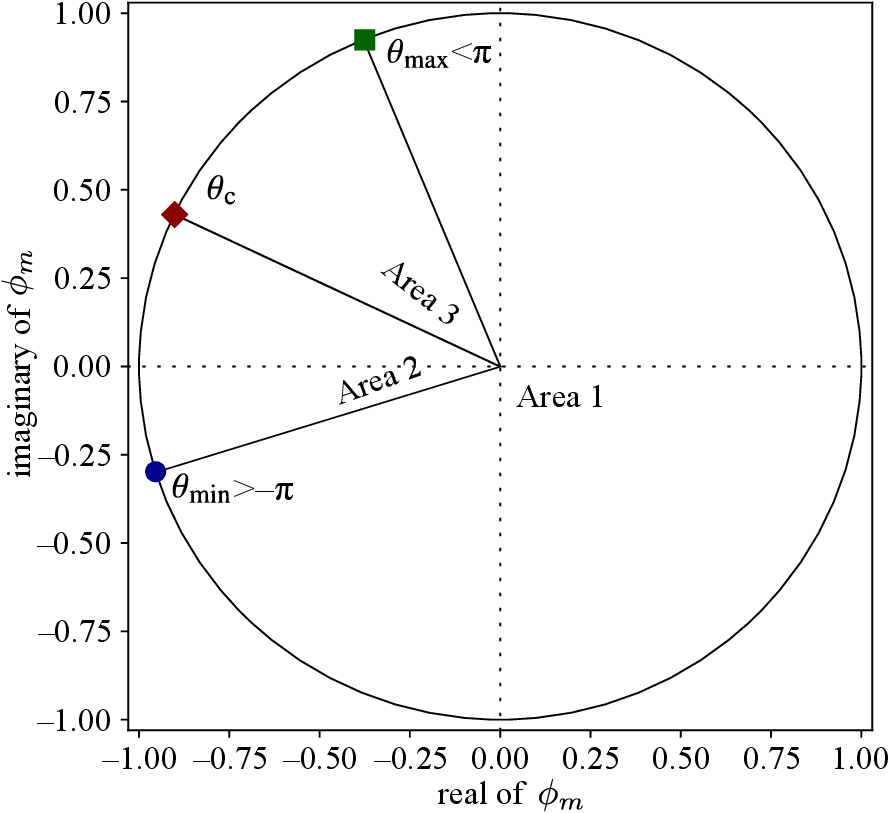}
    \caption{Illustration of the angle projection. $\theta_m^{(i)}$ is set to be $\theta_m^{(i)} = \hat{\theta}_m$, $\theta_m^{(i)} = \thetamin$ and $\theta_m^{(i)} = \thetamax$ when $\hat{\theta}_m$ is in Area 1, Area 2, and Area 3, respectively, as explained in (\ref{eq:theta m project}).}
    \label{fig:angleprojection}
\end{figure}

The proposed PGM approach, both for updating $\TT$ (executing (\ref{eq:T hat}) and then (\ref{eq:T project})) and updating $\thetaa$ (executing (\ref{eq:thetaa hat}) and then (\ref{eq:theta m project})), may not be stable since $\Csec$ may degrade over successive iterations, \textit{i.e.,} $\Csec^{(i)} < \Csec^{(i-1)}$ due to inappropriate choices for the step sizes $\upsilon_t$ and $\upsilon_{\theta}$ in (\ref{eq:T hat}) and (\ref{eq:thetaa hat}), respectively. To address this issue, the present work adopts a simple procedure to monitor both step sizes during the iteration process to ensure monotonicity. In particular, following the procedure in \cite{Judd22Sumrate}, if $\Csec^{(i)} < \Csec^{(i-1)}$, $\upsilon_t$ or $\upsilon_{\theta}$ (or both) are reduced iteratively by a factor $c$, where $0 < c < 1$, until $\Csec^{(i)} > \Csec^{(i-1)}$.

In addition to monitoring $\upsilon_t$ and $\upsilon_{\theta}$ during iterations to ensure monotonicity, it is also necessary to determine appropriate settings of their initial values. If the step sizes are too small, the iteration process may take to long to reach a stationary point. Conversely, if they are too large, iterations may drive $\TT$ and $\thetaa$ far away from their respective constraints and cause instability. An initial value for $\upsilon_{\theta}$ in the first iteration can be determined as
\begin{equation} \label{eq:init upsilon theta}
    \upsilon_{\theta} = \thetamaxstep / \max \left( \left| \frac{\partial \Csec}{\partial \theta_1} \right|, \cdots, \left| \frac{\partial \Csec}{\partial \theta_M} \right| \right).
\end{equation}
By executing (\ref{eq:init upsilon theta}), the maximum entry of vector $\left| \thetaahat - \thetaa^{\text{(init)}} \right|$ after executing (\ref{eq:thetaa hat}) when $i_1=1$ is $\thetamaxstep$. Exploiting the fact that $\hat{\theta}_m$ should remain between $\thetamin$ and $\thetamax$, $\thetamaxstep$ can be chosen from, for example, a value between 0 and $\left( \thetamax - \thetamin \right) / 2$. 

However, the proposed strategy for determining an initial value for $\upsilon_{\theta}$, as stated in (\ref{eq:init upsilon theta}), cannot be adopted for $\upsilon_t$ as $\TT$ does not have an elemental constraint. Accordingly, this study determines an initial value for $\upsilon_t$ by leveraging the fact that $\TT$ must lie on the boundary constraint $\trace \left( \TT \TT^H \right) = P$. As a result, the gradient shift in (\ref{eq:T hat}) must bring $\That$ 'not too far' from $\TT^{(0)}$. Accordingly, given $\TT^{(0)}$, which lies on the boundary constraint, the initial value for $\upsilon_{t}$ is determined as
\begin{equation} \label{eq:init upsilon t}
    \upsilon_t = \frac{\tau \sqrt{P} }{\left\| \partial C / \partial \TT^* \right\|},
\end{equation}
where $\tau > 0$, and $\left\| \cdot \right\|$ is the Frobenius norm operator. By executing (\ref{eq:init upsilon t}), the value of $\left\| \That - \TT^{(0)} \right\|$ after executing (\ref{eq:T hat}) in the first iteration is equal to approximately $\tau \left\| \TT^{(0))} \right\|$. The value of $\tau$ can be set, for example, between 0 and 1 to limit the amount of the gradient step in (\ref{eq:T hat}) in the first iteration. Algorithm ~\ref{alg:PGM} shows the details of the proposed PGM method for solving (\ref{prob:main}).

\begin{algorithm}
    \caption{PGM procedure for solving problem (\ref{prob:main})}
    \label{alg:PGM}
    \SetAlgoLined
    \KwIn{$\thetaa^{\text{(init)}}$, $\TT^{\text{(init)}}$, $\Hab$, $\Har$, $\Hrb$, $\Hae$, $\Hre$, $P$, $\beta_{\text{min}}$, $\alpha$, $\tilde{\theta}$, $\thetamin$, $\thetamax$, $\sigmab$, $\sigmae$.}
    \KwOut{$\Csec$, $\thetaaopt$, $\Topt$.}
    Initialize $i_1=0$, $|\Delta \Csec| = \infty$. \\
    \While{$i_1 < i_{\text{\normalfont{max}}}$ \normalfont{ and} $|\Delta \Csec| > \xi$} {
        Compute $\partial \Csec / \partial \thetaa$ using (\ref{eq:dCsec dthetaa}). \label{alg:PGM:dC dthetaa} \\
        Initialize $i_2$, $|\Delta \Csec| = \infty$. \\
        \If{$i_1=1$}{
            Compute $\upsilon_{\theta}$ using (\ref{eq:init upsilon theta}).
        } \label{alg:PGM:endif upsilontheta}
        \While{$i_2 < i_{ \normalfont{\text{max}}}$ \normalfont{ and } $|\Delta \Csec| > \xi$}{
            Compute $\thetaahat$ using (\ref{eq:thetaa hat}). \label{alg:PGM:thetaahat} \Comment{gradient-based update} \\
            Compute $\thetaa^{(i_1)}$ using (\ref{eq:theta m project}). \Comment{projection} \\
            Compute $\Delta \Csec = \Csec \left( \thetaa^{(i_1)} \right) - \Csec \left( \thetaa^{(i_1-1)} \right)$. \\
            \eIf {$\Delta C < 0$}{
                $\upsilon_{\thetaa} \leftarrow c \cdot \upsilon_{\thetaa}$. \Comment{step-size reduction}
            }{
            Break. \Comment{out from inner loop}
            } \label{alg:PGM:endif reduct theta}
        }
        Compute $\partial \Csec / \partial \TT^*$ using (\ref{eq:dCsec dT}). \label{alg:PGM:dC dT}\\
        Initialize $i_2$, $|\Delta \Csec| = \infty$. \\
        \If{$i_1=1$}{
            Compute $\upsilon_{t}$ using (\ref{eq:init upsilon t}).
        } \label{alg:PGM:endif upsilont}
        \While{$i_2 < i_{ \normalfont{\text{max}}}$ \normalfont{ and } $|\Delta \Csec| > \xi$}{
            Compute $\That$ using (\ref{eq:T hat}). \label{alg:PGM:That} \Comment{gradient-based update} \\
            Compute $\TT^{(i_1)}$ using (\ref{eq:T project}). \Comment{projection} \\
            Compute $\Delta \Csec = \Csec \left( \TT^{(i_1)} \right) - \Csec \left( \TT^{(i_1-1)} \right)$. \\
            \eIf {$\Delta C_{\normalfont{\text{sec}}} < 0$}{
                $\upsilon_t \leftarrow c \cdot \upsilon_t$. \Comment{step-size reduction}
            }{
            Break. \Comment{out from inner loop}
            } \label{alg:PGM:endif reduct t}
        }
    }
\end{algorithm}

To assess the implementation cost, the computational complexity of Algorithm \ref{alg:PGM} is derived in the following. Each complex multiplication is equivalent to four real multiplications; hence, any expression denoting the number of complex multiplications is scaled by a factor of four to obtain the real multiplication count. A Real$\times$complex scaling operation costs cost 2 real multiplications. Hermitian matrix updates exploit symmetry (reducing the cost by $\approx$0.5) but still retain the fourfold factor per complex multiplication. In general, the multiplication order among multiple matrices affects the resulting complexity, as in the matrix-chain multiplication problem. Since the value of $M$ is much larger than $\Na$, $\Nb$, and $\Ne$ in the practical setting considered in the present work, the multiplication order is determined such that factor $M$ appears as rarely as possible in the resulting complexity calculation. The other intermediate matrix multiplications are executed in order since no general assumption is made regarding the relative magnitudes of $\Na$, $\Nb$, and $\Ne$.

\begin{table}[t!]
\renewcommand{\arraystretch}{1.25}
    \centering
    \caption{Number of multiplications for various variables in Alg. \ref{alg:PGM}}
    \label{tab:complexity}
    \begin{tabular}{|p{0.16 \linewidth}|p{0.16 \linewidth}|p{0.53 \linewidth}|}
        \hline
        \textbf{Variable} & \textbf{Given} & \textbf{Number of multiplications} \\\hline         
        $\Hhatb$ & - & $4 (M \Nb + M \Na \Nb)$  \\\hline 
        $\Hhate$ & - & $4 (M \Ne + M \Na \Ne)$ \\\hline
        $\AAb^{-1}$ & $\Hhatb$ & $4 \big[ \Na \Nb \Ns + 0.5 \Nb^2 \Ns + 0.5 \Nb^2 + (1/3) \Nb^3 \big] + \lambdadiv$  \\\hline
        $\AAe^{-1}$ & $\Hhate$ & $4 \big[ \Na \Ne \Ns + 0.5  \Ne^2 \Ns + 0.5 \Ne^2 + (1/3) \Ne^3 \big] + \lambdadiv$ \\\hline
        $\betaa$ & - & $M \left( \lambdasin + \lambdapow(\alpha) + 2 \right)$ \\\hline
        $\partial \phii / \partial \thetaa$ & $\betaa$ & $M \big[ (2\lambdasin + 2 \lambdacos) + \lambdapow (\alpha -1) + 6 \big] + \lambdapow (\alpha) + 2$ \\\hline
        $\partial \Csec / \partial \thetaa$ & $\Hhatb$, $\Hhate$, $\AAb^{-1}$, $\AAe^{-1}$, $\partial \phii / \partial \thetaa$ & $4 \big[ 0.5 \Na^2 \Ns + M \Na^2 + \Na \Nb^2 + M \Na \Nb + \Na \Ne^2 + M \Na \Ne + M \Na + 1.25M \big] + \lambdadiv + 1$ \\\hline
        $\upsilon_{\theta}$ in (\ref{eq:init upsilon theta}) & $\partial \Csec / \partial \thetaa$ & $\lambdadiv$ \\\hline
        $\hat\thetaa$ in (\ref{eq:thetaa hat}) & $\partial \Csec / \partial \thetaa$ & $M$ \\\hline
        $\thetaa^{(i_1)}$ in (\ref{eq:theta m project}) & $\hat{\thetaa}$ & $M \big( \lambdadiv + 2\big)$ \\\hline
        $\Delta \Csec$ & $\Hhatb$, $\Hhate$ & $ 4 \big[ \Na \Nb \Ns + 0.5 \Nb^2 \Ns + 0.5 \Nb^2 + (1/3) \Nb^3 + \Na \Nb \Ne + 0.5 \Ne^2 \Ns + 0.5 \Ne^2 + (1/3) \Ne^3 \big] + \Nb \lambdalogtwo + \Ne \lambdalogtwo + 2$  \\\hline
        $\partial \Csec / \partial \TT^*$ & $\Hhatb$, $\Hhate$, $\AAb^{-1}$, $\AAe^{-1}$ & $4 \big[ \Na \Nb^2 + \Na^2 \Nb + \Na \Ne^2 + \Na^2 \Ne + 2 \Na^2 \Ns + 0.5 \Na \Ns \big] + \lambdadiv$ \\\hline
        $\upsilon_{t}$ in (\ref{eq:init upsilon t})& $\partial \Csec / \partial \TT^*$ & $(6 + 2 \lambdadiv) \Na \Ns + \lambdasqrt $ \\\hline
        $\That$ in (\ref{eq:T hat}) & $\partial \Csec / \partial \TT^*$ & $2 \Na \Ns$ \\\hline
        $\TT^{(i)}$ in (\ref{eq:T project}) & $\That$ & $6\Na \Ns + \lambdadiv + \lambdasqrt$ \\\hline
        
    \end{tabular}
\end{table}

Table \ref{tab:complexity} shows the number of multiplications required for the variables involved in Algorithm \ref{alg:PGM}, where the order of presentation of these variables is bottom-up (from the simplest to the more complex) and follows their appearance in the algorithm. In the table, factor 1/3 in the inverse operation assumes computation via Cholesky decomposition and triangular solves, rather than an explicit matrix inversion, which would introduce a factor of 4/3. $\lambdasin$, $\lambdacos$, $\lambdapow (\cdot)$, $\lambdadiv$, $\lambdalogtwo$, and $\lambdasqrt$ represent the equivalent number of scalar multiplications required to implement the sine, cosine, power, division, log$_2$, and square root operations, respectively, with the exact values of these variables depending on the specific machine implementation \cite{Johansson15Efficient}.

Based on Table \ref{tab:complexity}, the complexities of executing lines \ref{alg:PGM:dC dthetaa}--\ref{alg:PGM:endif upsilontheta}, \ref{alg:PGM:thetaahat}--\ref{alg:PGM:endif reduct theta}, \ref{alg:PGM:dC dT}--\ref{alg:PGM:endif upsilont}, and \ref{alg:PGM:That}--\ref{alg:PGM:endif reduct t} in Algorithm \ref{alg:PGM} are asymptotically $\mathcal{O}(\Gamma_{1 \theta})$, $\mathcal{O}(\Gamma_{2 \theta})$, $\mathcal{O}(\Gamma_{1 \text{t}})$, $\mathcal{O}(\Gamma_{2 \text{t}})$, respectively, where
\begin{subequations} \label{eq:Gammas}
\begin{align} 
    \Gamma_{1 \theta} =& M \left( \Na^2 + \Na \Nb + \Na \Ne \right) + \Na^2 \Ns + \Na \Nb^2 + \Na \Ne^2 \nonumber \\
    & + \Na \Nb \Ns + \Nb^2 \Ns + \Nb^3 + \Na \Ne \Ns + \Ne^2 \Ns + \Ne^3 \\
    \Gamma_{2 \theta} =& M \Na \left( \Nb + \Ne \right) + \Na \Nb \Ns + \Nb^2 \Ns + \Nb^3 \nonumber \\
    & + \Na \Nb \Ne + \Ne^2 \Ns + \Ne^3 \\
    \Gamma_{1 \text{t}} =& \Na \Nb^2 + \Na^2 \Nb + \Na \Ne^2 + \Na^2 \Ne + \Na^2 \Ns + \Na \Nb \Ns \nonumber \\
    & + \Nb^2 \Ns + \Nb^3 + \Na \Ne \Ns + \Ne^2 \Ns + \Ne^3 \label{eq:Gamma one theta}\\
    \Gamma_{2 \text{t}} =& \Na \Nb \Ns + \Nb^2 \Ns + \Nb^3 + \Na \Nb \Ne + \Ne^2 \Ns + \Ne^3 \label{eq:Gamma two theta}
\end{align}
\end{subequations}
In (\ref{eq:Gamma one theta}) and (\ref{eq:Gamma two theta}), it is assumed that $\Hhatb$ and $\Hhate$ are stored before the execution of line \ref{alg:PGM:dC dT} and are thereform unaffected by $M$. Based on the notations in (\ref{eq:Gammas}), the complexity of Algorithm \ref{alg:PGM} is asymptotically $\mathcal{O} \left( I_1 \left( \Gamma_{1 \theta} + \Gamma_{1\text{t}} + I_{2 \theta} \Gamma_{2 \theta} + I_{2 \text{t}} \Gamma_{2\text{t}} \right) \right)$, where $I_1$, $I_{2 \theta}$, and $I_{2 \text{t}}$ represent the average iteration numbers required to execute the outer loop, the inner loop for updating $\thetaa$, and the inner loop for updating $\TT$, respectively. In practical deployments, $M$ is usually much larger than $\Na$, $\Nb$, and $\Ns$. Let us assume that $\Na = \Nb = \Ne = \Nx$. The overall complexity is thus asymptotically $\mathcal{O} \left( I_1 \left( M \Nx^2 + \Nx^3 + I_{2 \theta} (M \Nx^2 + \Nx^3) + I_{2 \text{t}} \Nx^3 \right) \right)$. Consequently, Algorithm \ref{alg:PGM} exhibits low complexity, as it grows polynomially with the matrix dimensions (at most cubic in order) and linearly with respect to $M$. This linear dependence on $M$ is particularly advantageous, since $M$ is typically much larger than the other system parameters in practical deployments.

\section{Sub-optimal Phase Shift with CPDM} \label{sec:CPDM}
The secrecy rate maximization problem can be interpreted as maximizing $\Rb$ while simultaneously minimizing $\Re$. Intuitively, $\Rb$ is high if the channel quality from Alice to {IU}, $\Hhatb$, is high. Similarly, $\Re$ is low if the channel quality from Alice to {NU}, $\Hhate$, is low. Motivated by this observation, this study proposes a channel power difference maximization (CPDM) formulation as a sub-problem to find a sub-optimal solution $\thetaa^{\text{(sub)}}$ that provides substantially lower complexity than that of the main secrecy rate maximization problem (\ref{prob:main}). CPDM is formulated as follows:
\begin{subequations} \label{prob:CPDM}
\begin{align} 
    \max_{\TT, \thetaa} \quad & \Pdiff = \trace \left[ \TT^H \left( \frac{1}{\sigmab} \Hhatb^H \Hhatb - \frac{1}{\sigmae} \Hhate^H \Hhate \right) \TT  \right], \label{prob:CPDM:obj} \\
    \text{s.t.} \quad & \trace \left( \TT^H \TT \right) \leq P \\
    & \phi_m = \beta_m e^{j \theta_m}, \forall m, \label{prob:CPDM:phim}\\
    & \beta_m = \left( 1 - \betamin \right) \left( \frac { \sin \left( \theta_m - \tilde{\theta} \right) + 1}{2} \right)^{\alpha} + \betamin, \forall m, \label{prob:CPDM:betam}\\
    & \thetamin \leq \theta_m \leq \thetamax, \forall m \label{prob:CPDM:bound theta}
\end{align}
\end{subequations}

The CPDM criterion is analytically related to the main secrecy rate maximization criterion, as established in the following proposition.

\begin{proposition} \label{prop:CPDM}
    The CPDM criterion expressed in (\ref{prob:CPDM:obj}) is a surrogate function of the secrecy rate criterion in (\ref{prob:main:obj}) and serves as an upper bound when $\FFb - \FFe \succeq \mathbf{0}$, where $\FFb \triangleq \TT^H \Hhatb^H \Hhatb \TT / \sigmab$ and $\FFe \triangleq \TT^H \Hhate^H \Hhate \TT / \sigmae$. Moreover, the bound gap is negligible when the eigenvalues of $\FFb$ and $\FFe$ are small.
\end{proposition}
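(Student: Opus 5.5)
The plan is to rewrite both criteria in terms of the $\Ns \times \Ns$ matrices $\FFb$ and $\FFe$ and then compare them eigenvalue by eigenvalue.

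\textbf{Rewriting both criteria.} By Sylvester's determinant identity $\det(\II + \mathbf{A}\mathbf{B}) = \det(\II + \mathbf{B}\mathbf{A})$, I would write
\begin{equation*}
\Rb = \log_2 \det\left(\II_{\Ns} + \FFb\right), \qquad \Re = \log_2 \det\left(\II_{\Ns} + \FFe\right).
\end{equation*}
Hence $\ln 2 \cdot \Csec = \ln\det(\II_{\Ns} + \FFb) - \ln\det(\II_{\Ns} + \FFe)$. The CPDM objective in (\ref{prob:CPDM:obj}) is, by linearity and cyclicity of the trace, exactly $\Pdiff = \trace(\FFb) - \trace(\FFe)$. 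Both $\FFb$ and $\FFe$ are positive semidefinite.

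\textbf{Surrogate property.} Let $\lambda_i(\cdot)$ denote eigenvalues. Then $\ln\det(\II + \mathbf{F}) = \sum_i \ln(1 + \lambda_i(\mathbf{F}))$. Its first-order Taylor expansion around $\mathbf{F} = \mathbf{0}$ is $\sum_i \lambda_i(\mathbf{F}) = \trace(\mathbf{F})$. So $\Pdiff$ is the linearization of $\ln 2 \cdot \Csec$, which is the sense in which it is a surrogate.

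\textbf{Upper bound when $\FFb - \FFe \succeq \mathbf{0}$.} Define
\begin{equation*}
g(\mathbf{F}) \triangleq \trace(\mathbf{F}) - \ln\det(\II + \mathbf{F}).
\end{equation*}
The claim $\ln 2 \cdot \Csec \le \Pdiff$ is equivalent to $g(\FFb) \ge g(\FFe)$. This is the main step: it needs monotonicity of $g$ in the Loewner order, not just the scalar inequality $\ln(1+x) \le x$. I would argue along the segment $\mathbf{F}_t = \FFe + t(\FFb - \FFe)$ for $t \in [0,1]$. Each $\mathbf{F}_t$ is a convex combination of PSD matrices, so it is PSD, and
\begin{equation*}
\frac{d}{dt} g(\mathbf{F}_t) = \trace\left[\left(\II - (\II + \mathbf{F}_t)^{-1}\right)(\FFb - \FFe)\right].
\end{equation*}
The first factor is PSD because the eigenvalues of $(\II + \mathbf{F}_t)^{-1}$ lie in $(0,1]$. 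The second factor is PSD by assumption. The trace of a product of two PSD matrices is nonnegative, so $g(\mathbf{F}_t)$ is nondecreasing in $t$. Integrating from $t=0$ to $t=1$ gives $g(\FFb) \ge g(\FFe)$, i.e., $\Pdiff \ge \ln 2 \cdot \Csec$. The $\ln 2$ factor is a positive constant that does not change maximizers.

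\textbf{Size of the gap.} The gap is
\begin{equation*}
\Pdiff - \ln 2 \cdot \Csec = g(\FFb) - g(\FFe) \le g(\FFb) = \sum_i \left[\lambda_i - \ln(1 + \lambda_i)\right],
\end{equation*}
with $\lambda_i = \lambda_i(\FFb)$; the inequality uses $g(\FFe) \ge 0$. Since $x - \ln(1+x) \le x^2/2$ for $x \ge 0$, the gap is at most $\tfrac12 \trace(\FFb^2) = \tfrac12 \sum_i \lambda_i^2$. This is second order in the eigenvalues, and by the same expansion $g(\FFe)$ is also of second order. Hence the gap is negligible when the eigenvalues of $\FFb$ and $\FFe$ are small, which completes the proposition.
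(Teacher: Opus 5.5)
Your proof is correct, but it reaches the key inequality by a different route than the paper. The paper eigen-decomposes $\FFb$ and $\FFe$ and uses the Courant--Fischer theorem (Weyl monotonicity) to get $b_n \ge \epsilon_n$ for the ordered eigenvalues whenever $\FFb - \FFe \succeq \mathbf{0}$. It then applies the scalar fact that $\zeta(x) = x - \ln(1+x)$ is increasing to each pair $(b_n,\epsilon_n)$ and sums. In your notation, this is the identity $g(\mathbf{F}) = \sum_n \zeta(\lambda_n(\mathbf{F}))$ combined with Loewner monotonicity of ordered eigenvalues. You instead prove Loewner monotonicity of $g$ directly: you differentiate along the segment $\mathbf{F}_t$ and use that the trace of a product of two PSD matrices is nonnegative. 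This avoids eigenvalue ordering and needs only the derivative of $\ln\det$, so it is self-contained, at the price of a slightly longer calculation than quoting Weyl's inequality.

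For the second claim, the paper relies on the series $\ln\det(\II+\XX) = \trace(\XX) - \tfrac12\trace(\XX^2) + \cdots$. That series converges only when the eigenvalues are below one, and it yields only the heuristic that the linear terms dominate. Your bound $x - \ln(1+x) \le x^2/2$ gives an explicit, non-asymptotic estimate, $0 \le \Pdiff - \ln 2 \cdot \Csec \le \tfrac12 \trace(\FFb^2)$, which is a sharper statement. Applying the same bound to both terms also gives $|\Pdiff - \ln 2 \cdot \Csec| \le \tfrac12 \max\{\trace(\FFb^2), \trace(\FFe^2)\}$ without the PSD hypothesis, matching the paper's unconditional approximation claim. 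Finally, you make explicit the Sylvester determinant identity that the paper uses silently when it passes to the $\Ns \times \Ns$ matrices.
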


\begin{IEEEproof}
    The proof is presented in Appendix \ref{sec:CPDM proof}.
\end{IEEEproof}

It is noted that (\ref{prob:CPDM:obj}) contains a matrix trace operation with a lower complexity than that of the matrix determinant operation in (\ref{prob:main:obj}).

Given a fixed $\thetaa$, let $\Pdiff$ in (\ref{prob:CPDM:obj}) be expressed as $\Pdiff = \trace \left( \TT^H  \GG \TT\right)$, where $\GG \triangleq \Hhatb^H \Hhatb / \sigmab - \Hhate^H \Hhate / \sigmae$. If $\GG \succeq \mathbf{0}$, the optimal $\TT$ is a matrix whose columns are the $\Ns$ eigenvectors of $\GG$ corresponding to the $\Ns$ largest eigenvalues. The resulting matrix is then normalized to satisfy $\trace \left( \TT^H \TT \right) = P$. Otherwise, if $\GG$ is indefinite, the optimal $\TT$ comprises the eigenvectors corresponding to the $\Ns$ largest positive eigenvalues and avoids the negative modes (which correspond to directions where the NU channel channel is stronger than the IU channel after noise weighting).

Given a fixed $\TT$, this study proposes a PGM method to solve problem (\ref{prob:CPDM}), in which $\thetaa$ is repeatedly shifted with respect to the gradient direction, and is projected each time in such a way that it meets the constraint (\ref{prob:CPDM:bound theta}). $\Pdiff$ can be transformed to a closed-form function w.r.t. $\thetaa$ by substituting (\ref{prob:CPDM:phim}) and (\ref{prob:CPDM:betam}) into (\ref{prob:CPDM:obj}). In iteration $i_1$, $\thetaa$ is updated as
\begin{equation} \label{eq:thetaa bar}
    \bar{\thetaa} = \thetaa^{(i_1-1)} + \bar{\upsilon}_{\theta} \frac{\partial \Pdiff}{\partial \thetaa},
\end{equation}
where $\bar{\upsilon}_{\theta}$ is the step size, and the gradient of $\Pdiff$ with respect to $\thetaa$ is given by
\begin{align} \label{eq:dPdiff dthetaa}
    \frac{\partial \Pdiff}{\partial \thetaa} =& 2 \Ree \Bigg\{ \frac{\partial \phii}{\partial \thetaa} \circ \diag \bigg[ \Har \TT \TT^H \Big( \Hhatb^H \Hrb / \sigmab \nonumber \\
    & - \Hhate^H \Hre / \sigmae\Big) \bigg] \Bigg\}.
\end{align}
Next, $\thetaa^{(i_1)}$ is obtained using the projection procedure, as described in (\ref{eq:theta m project}). The step-size initialization and monitoring procedure described in Sec. \ref{sec:join opt} can also be applied to enhance the robustness of the algorithm, where the equation for determining the initial value of $\bar{\upsilon}_{\theta}$ is given by
\begin{equation} \label{eq:init upsilon bar theta}
    \bar{\upsilon}_{\theta} = \thetamaxstep / \max \left( \left| \frac{\partial \Pdiff}{\partial \theta_1} \right|, \cdots, \left| \frac{\partial \Pdiff}{\partial \theta_M} \right| \right).
\end{equation}
Algorithm \ref{alg:CPDM} presents the details of the PGM method proposed in this study to solve problem (\ref{prob:CPDM}) after performing the step-size initialization and monitoring procedure described in Sec. \ref{sec:join opt}.

\begin{algorithm}
    \caption{PGM procedure for solving problem (\ref{prob:CPDM}) Given $\TT$}
    \label{alg:CPDM}
    \SetAlgoLined
    \KwIn{$\thetaa^{\text{(init)}}$, $\TT$, $\Hab$, $\Har$, $\Hrb$, $\Hae$, $\Hre$, $\beta_{\text{min}}$, $\alpha$, $\tilde{\theta}$, $\thetamin$, $\thetamax$.}
    \KwOut{$\Pdiff$, $\thetaa^{\text{(sub)}}$.}
    Initialize $i_1=0$, $|\Delta \Pdiff| = \infty$. \\
    \While{$i_1 < i_{\text{\normalfont{max}}}$ \normalfont{ and} $|\Delta \Pdiff| > \xi$} {
        Compute $\partial \Pdiff / \partial \thetaa$ using (\ref{eq:dPdiff dthetaa}). \label{alg:CPDM:dP dthetaa} \\
        Initialize $i_2$, $|\Delta \Pdiff| = \infty$. \\
        \If{$i_1=1$}{
            Compute $\bar{\upsilon}_{\theta}$ using (\ref{eq:init upsilon bar theta}).
        } \label{alg:CPDM:endif upsilon init}
        \While{$i_2 < i_{ \normalfont{\text{max}}}$ \normalfont{ and } $|\Delta \Pdiff| > \xi$}{
            Compute $\bar{\thetaa}$ using (\ref{eq:thetaa bar}). \label{alg:CPDM:theta bar} \Comment{gradient-based update} \\
            Compute $\thetaa^{(i_1)}$ using (\ref{eq:theta m project}). \Comment{projection} \\
            Compute $\Delta \Pdiff = \Pdiff \left( \thetaa^{(i_1)} \right) - \Pdiff \left( \thetaa^{(i_1-1)} \right)$. \\
            \eIf {$\Delta P_{\normalfont{\text{diff}}} < 0$}{
                $\bar{\upsilon}_{\thetaa} \leftarrow c \cdot \bar{\upsilon}_{\thetaa}$. \Comment{step-size reduction}
            }{
            Break. \Comment{out from inner loop}
            } \label{alg:CPDM:endwhile}
        }
    }
\end{algorithm}
To assess the implementation cost, the computational complexity of Algorithm~\ref{alg:CPDM} is derived in the following. Utilizing the same procedure as that employed for computing the complexity of Algorithm~\ref{alg:PGM}, Table \ref{tab:complexity2} presents the number of multiplications required for the variables involved in Algorithm~\ref{alg:CPDM}. (Note that some of the variables are omitted since they have already been given in Algorithm~\ref{alg:PGM}).) Based on Table \ref{tab:complexity2}, the complexities of executing lines \ref{alg:CPDM:dP dthetaa}--\ref{alg:CPDM:endif upsilon init} and \ref{alg:CPDM:theta bar}--\ref{alg:CPDM:endwhile} are asymptotically $\mathcal{O} \left( \bar{\Gamma}_{1 \theta} \right)$ and $\mathcal{O} \left( \bar{\Gamma}_{2 \theta} \right)$, respectively, where
\begin{align}
    \bar{\Gamma}_{1 \theta} &= \Na^2 \Ns + M \Na^2 + M \Na \Nb + M \Na \Ne \\
    \bar{\Gamma}_{2 \theta} &= M \Na \Nb + M \Na \Ne.
\end{align}
Therefore, the complexity of Algorithm~\ref{alg:CPDM} is asymptotically $\mathcal{O} \left( \bar{I}_1 \left( \bar{\Gamma}_{1 \theta} + \bar{I}_2 \bar{\Gamma}_{2 \theta} \right) \right) \approx \mathcal{O} \left( \bar{I}_1 \left( M \Nx^2 + \Nx^3 + \bar{I}_2 M \Nx^2 \right) \right)$, where $\bar{I}_{1}$ and $\bar{I}_2$ are the average iteration numbers required to execute the outer and inner loops, respectively. As seen, the complexity is a third-degree polynomial with respect to the matrix size. In addition, it is only linear with $M$, where the value of $M$ is large compared to $\Nx$. 

\begin{table}[t!]
\renewcommand{\arraystretch}{1.25}
    \centering
    \caption{Number of multiplications for various variables in Alg.~\ref{alg:CPDM}}
    \label{tab:complexity2}
    \begin{tabular}{|p{0.16 \linewidth}|p{0.16 \linewidth}|p{0.53 \linewidth}|}
        \hline
        \textbf{Variable} & \textbf{Given} & \textbf{Number of multiplications} \\\hline         
        $\partial \Pdiff / \partial \thetaa$ & $\Hhatb$, $\Hhate$, $\partial \phii / \partial \thetaa$ & $4 (0.5 \Na^2 \Ns + M \Na^2 + M \Na \Nb + M \Na \Ne + 2 M \Na + 1.25M )$  \\\hline
        $\bar{\upsilon}_{\theta}$ in (\ref{eq:init upsilon bar theta}) & $\partial \Pdiff / \partial \thetaa$ & 1 \\\hline
        $\bar{\thetaa}$ & $\partial \Pdiff / \partial \thetaa$ & $M$ \\\hline
        $\Delta \Pdiff$ & $\Hhatb$, $\Hhate$ & $4 \left( \Na \Nb + \Na \Ne \right)$ \\\hline
    \end{tabular}
\end{table}

\section{Numerical Results and Discussion}
\label{simulation}

This section presents the numerical simulation results, which demonstrate the efficacy of the proposed PGM approach in terms of the secrecy rate and channel power difference for the practical RIS-assisted MIMO system shown in Fig.~\ref{fig:systemmodel}. Referring to Fig.~\ref{fig:postion}, the positions of Alice, the RIS, IU, and NU are assumed to be (0,5,10), (100,0,2), (100,3,0), and (90,2,0), respectively, as described in \cite{DBLP:IRS_MIMO_SR1}. Note that all the distances are measured in meters (m). In implementing the system model shown in Fig.~\ref{fig:systemmodel}, Alice is assumed to possess $ N_a=4$ antennas, while IU and NU have the same number of antennas ($N_b=N_e=4$).  The large-scale path loss can be evaluated as $\rho_l =\rho_{l_o} (\frac{d}{d_o})^{-\gamma}  $, where $\rho$ is the path loss at distance $d$, $\rho_{l_o}=-30$ dB is the path loss at distance $d_o=1$ m, and $\gamma$ is the path loss exponent, with its values for $\Har, \Hrb, \Hre$ $\Hab$, and $\Hae$ selected as $2.2,2.5,2.5,3.5$, and $3.5$, respectively~\cite{DBLP:simulation_alpha_set}. The transmission power available at Alice is $P=30$ dBm, and IU and NU both have noise powers of $\sigmab=\sigmae=\sigma^2= -110$ dBm. Finally, for the practical RIS model in (\ref{eq:practical_model1}) and (\ref{eq:practical_model2}), the parameter values $L_1=2.5$ nh, $L_2=0.7$ nh, $f=2.5$ GHz are set directly, while $\beta_{\text{min}},\tilde{\theta}, \alpha,\theta_{\text{min}}$, and $\theta_{\text{max}}$ are obtained using the method described in~\cite{DBLP:journals/tcom/AbeywickramaZWY20}. 

The present study considers a practical RIS system. In other words, the proposed algorithms explicitly incorporate the resistance values ($R$) of the REs during the optimization process. Notably, however, Algorithms \ref{alg:PGM} and \textbf{\ref{alg:CPDM}} are both general in nature, and can thus be equally applied to ideal RIS scenarios, i.e., $R=0$$\Omega$.

\subsection{Benchmark Schemes}

To the best of the authors' knowledge, no prior work has addressed all the components outlined in Section II. Therefore, to compare the efficacy of the proposed PGM algorithms, the following benchmark techniques are employed:
\begin{enumerate}

   \item  Practical PGM: In this benchmark, proposed Algorithm ~\ref{alg:PGM} is used to jointly design the optimal values of $\thetaa$ and $\bm{T}$ and calculate $\Csec$ with real RE resistance values $(R)$.
   
   \item Ideal PGM: In this scheme, Algorithm ~\ref{alg:PGM} is first applied under the ideal RIS assumption i.e., $R=0\Omega$ to obtain the optimal $\thetaa$ and $\TT$. Since practical RIS elements possess non-zero resistance, optimized values are then used in~(\ref{prob:main}) to evaluate the performance under the true value of $R$. 
\item BCD: The authors of~\cite{DBLP:IRS_MIMO_SR1} analyzed RIS-assisted MIMO systems for secrecy rate maximization under the assumption of an ideal RIS, where the optimal values of the phase shifts were designed using the BCD technique. Similar to the Ideal PGM case, this benchmark in the present study uses the obtained phase shifts to evaluate the performance under the practical RIS model.

    \item Random-RIS: In this benchmark, Alice adopts the proposed precoding design method, while the RIS operates with random phase shifts.
    \item No-RIS: This benchmark represents communication in the absence of the RIS. Comparing the performance of the proposed method with this benchmark highlights the advantages of RIS integration in PLS~\cite{DBLP:journals/iotj/ChaiBBSN23}.
    
\end{enumerate}
\begin{figure}[t!]
\centering
\includegraphics[ width =0.8\linewidth]{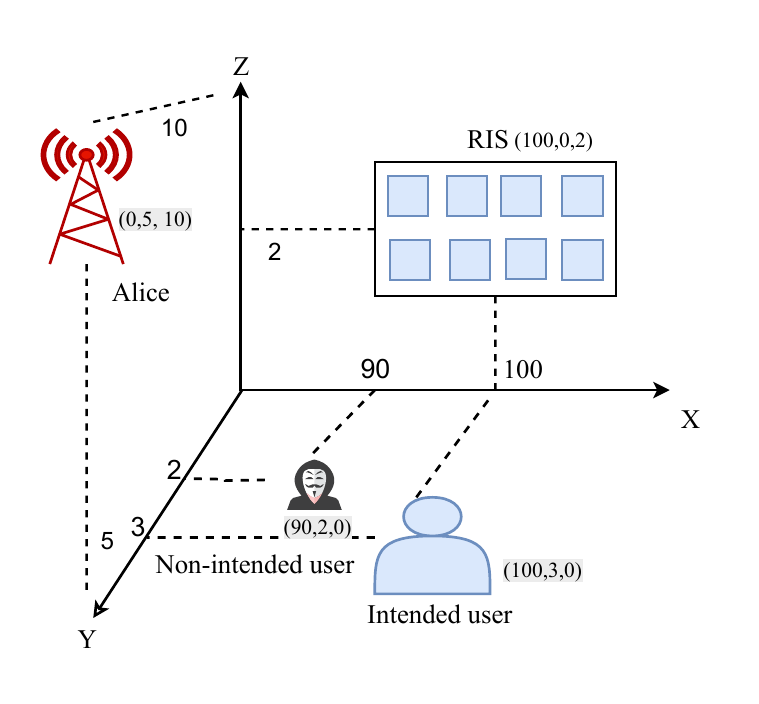}
		 \caption{{ Position of blocks.}}
			\label{fig:postion}
	\end{figure}

\subsection{Secrecy Rate Versus BS Transmit Power}

\begin{figure*}[t!]

 \begin{subfigure}{0.33\textwidth}
 \centering
 \includegraphics[width=0.95\linewidth]{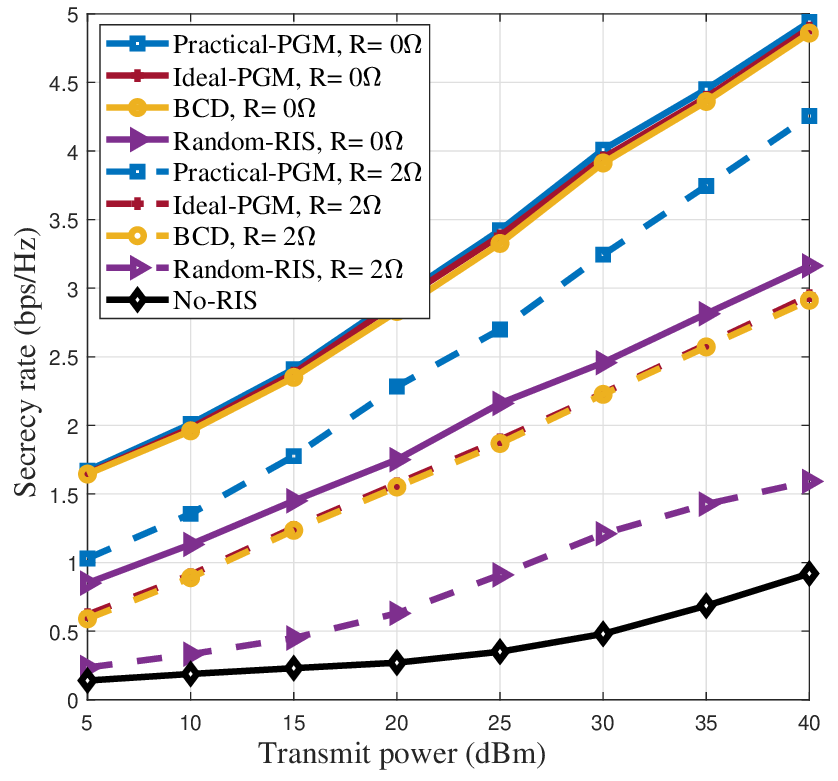}
 \subcaption{} \label{fig:RIS_transmitpower_ideal}
 \end{subfigure}
 \begin{subfigure}{0.33\textwidth}
 \centering
 \includegraphics[width=0.95\linewidth]{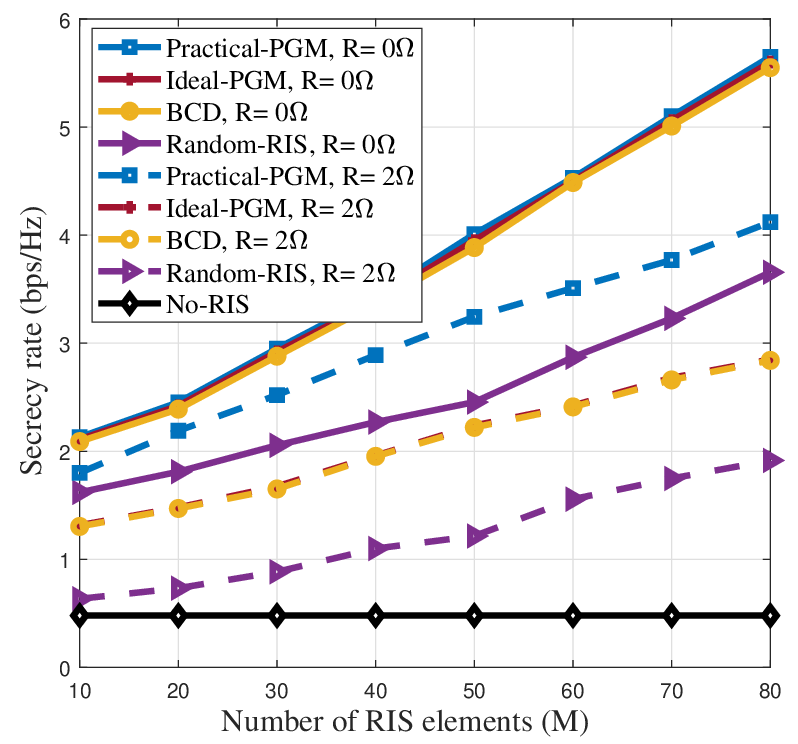}
 \subcaption{} \label{fig:rate_ris_ideal}
 \end{subfigure}
 \begin{subfigure}{0.34\textwidth}
 \centering
\includegraphics[width=1.0\linewidth]{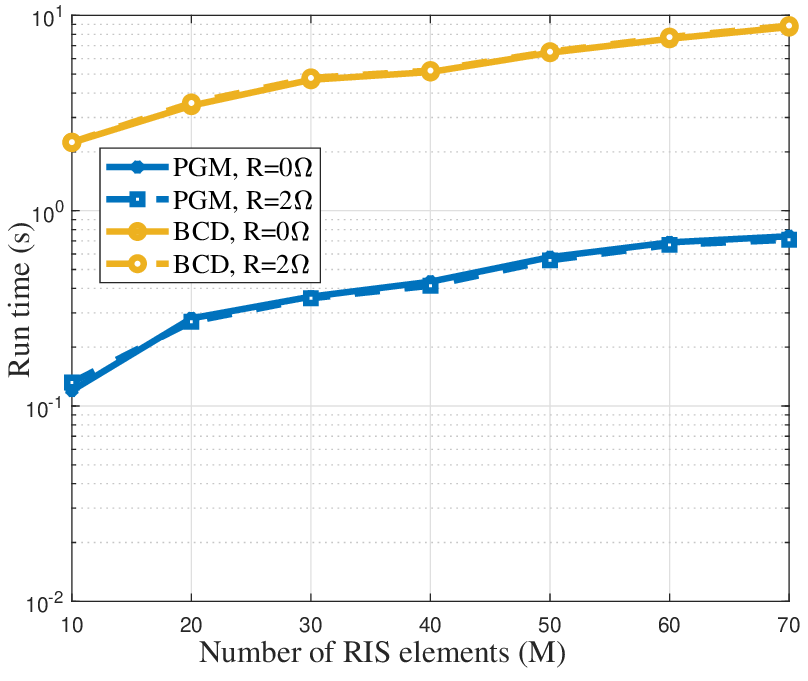}
 \subcaption{} \label{fig:runtime_ris}
 \end{subfigure}

	\hfil
	\vspace{-0.4cm}
	\caption{\footnotesize a) Secrecy rate versus BS transmit power, at $M=50$ (b) Secrecy rate versus number of RIS elements ($M$) at $P=30$ dBm (c) { Run time versus number of RIS elements ($M$), considered value of REs resistance in RIS have ( $R=0\Omega$, $R=2\Omega$), and  $P=30$ dBm.}}
	\label{fig: Rate }
\end{figure*}


Fig.~\ref{fig:RIS_transmitpower_ideal} shows the secrecy rate versus the transmit power at Alice for different RIS configurations and optimization schemes for the cases of an ideal RIS $(R=0\, \Omega)$ and non-ideal $(R=2\, \Omega)$, respectively. All the RIS-assisted schemes significantly outperform the no-RIS benchmark, thereby demonstrating the effectiveness of RIS in enhancing the PLS. When the deployed RIS is ideal, the secrecy rate achieved by the Practical-PGM method coincides with those of the Ideal-PGM and BCD approaches, indicating that all three methods have the same performance. However, when the deployed RIS is non-ideal ($R=2\, \Omega$), the Practical-PGM scheme clearly outperforms both Ideal-PGM and BCD, thereby confirming the importance of explicitly incorporating the RE resistance into the optimization problem. In addition, the secrecy rate under the ideal RIS scenario is higher than that in the non-ideal RIS case, as the hardware-induced reflection losses degrade the system performance. The performance gap between the practical and ideal models becomes more pronounced at higher transmit powers. Overall, the results confirm the superiority of the proposed PGM algorithm and highlight the critical impact of the RIS hardware characteristics on the secrecy performance.

\subsection{Secrecy Rate Versus Number of RIS elements}

Fig.~\ref{fig:rate_ris_ideal} shows the secrecy rate versus the number of RIS elements $(M)$ under the two RIS conditions (ideal $(R=0\, \Omega)$ and non-ideal $(R=2\, \Omega)$). As expected, the secrecy rate monotonically increases with $M$ across all the considered schemes, since large RIS arrays provide higher beamforming gains. The secrecy rate achieved by the practical-PGM coincides with those of the Ideal-PGM and BCD methods when the deployed RIS is ideal. However, when the deployed RIS is non-ideal, the practical-PGM scheme outperforms the ideal-PGM and BCD methods, highlighting its robustness in the presence of hardware constraints. The secrecy rate in the ideal-RIS scenario shows a better performance than the practical RIS method due to the absence of hardware-induced losses, and the performance gap between the two methods widens as $M$ increases. Finally, the no-RIS scheme exhibits a flat secrecy rate with respect to $M$ and performs substantially worse than all the RIS-assisted schemes, thereby underscoring the crucial role of RIS in enhancing PLS.

\subsection{Run Time Versus Number of RIS Elements}

Fig.~\ref{fig:runtime_ris} shows the run time performance of proposed  PGM Algorithm~\ref{alg:PGM} and BCD  as a function of the number of RIS elements $(M)$ for $R=0 \, \Omega$ and $R=2 \, \Omega$. The proposed PGM algorithm exhibits a significantly lower run time compared to BCD across all values of $M$. In particular, for all values of $M$ under investigation, PGM is approximately one order of magnitude faster than BCD. This phenomenon can be explained by the difference in the computational complexities of the two methods, as indicated in Table \ref{tab:complexity_all}. As observed, the per-iteration computational complexity of the proposed PGM method is linear with the number of REs ($\mathcal{O}\left(M\right)$). In contrast, for the BCD method, it is quadratic ($\mathcal{O}\left(M^2\right)$). In addition, for both PGM and BCD, the run times are almost identical under $R=0 \, \Omega$ and $R=2 \, \Omega$, which indicates that the runtimes of both methods are insensitive to the RE resistance value. Overall, the results demonstrate that the proposed PGM algorithm not only achieves significant reductions in the run time compared with the BCD method but also offers excellent scalability. Consequently, it represents a promising approach for real-time optimization of the RIS phase shifts in large-scale RIS-assisted communication systems.

\begin{table}[t!]
\renewcommand{\arraystretch}{1.25}
    \centering
    \caption{ Overall computational complexity}
    \label{tab:complexity_all}
    \begin{tabular}{|p{0.38\linewidth}|p{0.628\linewidth}|}
        \hline
        \textbf{Method}  & \textbf{Complexity} \\\hline         
     Proposed \textbf{Algorithm~\ref{alg:PGM}} PGM & $\mathcal{O} \left( I_1 \left( M \Nx^2 + \Nx^3 + I_{2 \theta} (M \Nx^2 + \Nx^3) + I_{2 \text{t}} \Nx^3 \right) \right)$ \\ 
  \hline
  BCD &  $\mathcal{O} \left( L_1 \left( 2 \Nx^2 + 3\Nx^3 +M^2 + T \Nx^3 + (\Nx^3 +\Nx^2)  \right) \right)$\\
  \hline
    \end{tabular}
\end{table}


\begin{figure*}[t!]

 \begin{subfigure}{0.33\textwidth}
 \centering
 \includegraphics[width=0.95\linewidth]{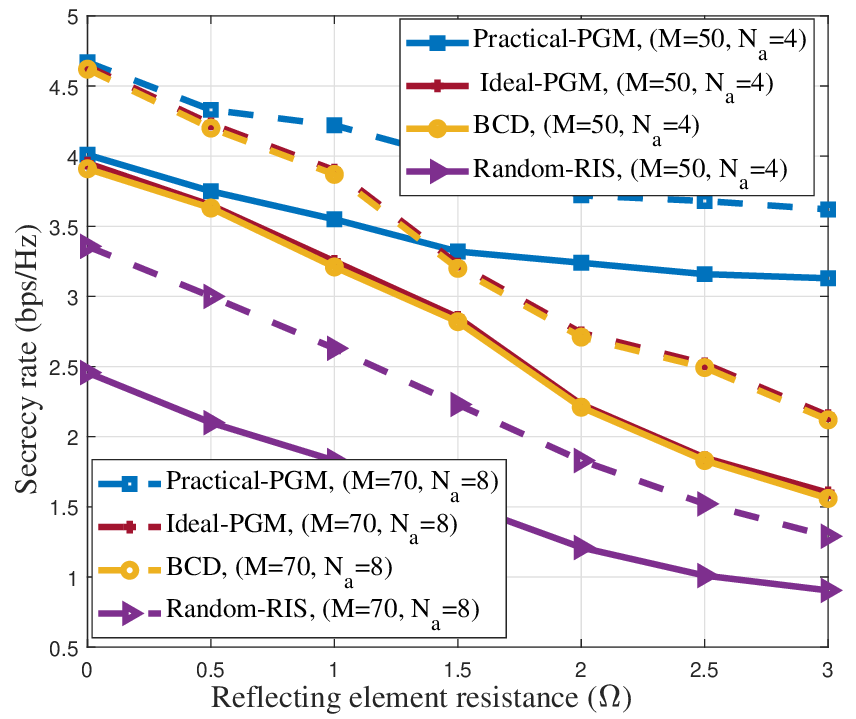}
 \subcaption{} \label{fig:rate_risvalue}
 \end{subfigure}
 \begin{subfigure}{0.33\textwidth}
 \centering
 \includegraphics[width=0.95\linewidth]{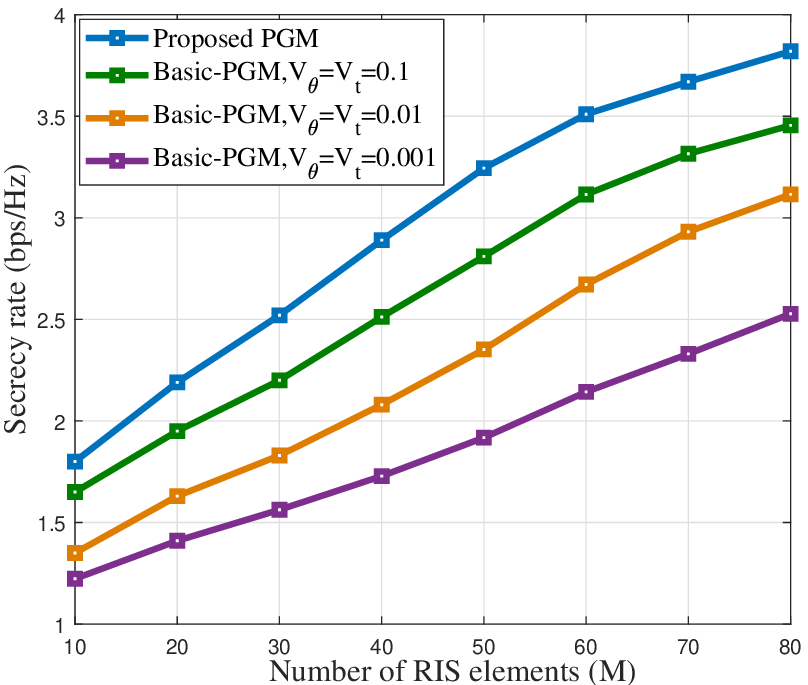}
 \subcaption{} \label{fig:rate_ris_stepsize} 
 \end{subfigure}
 \begin{subfigure}{0.32\textwidth}
 \centering
\includegraphics[width=0.95\linewidth]{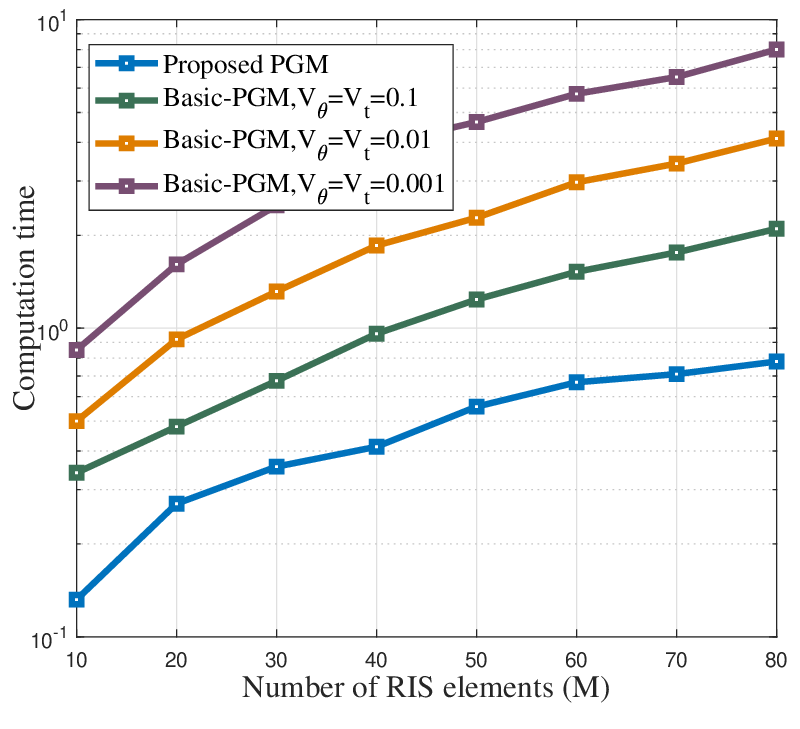}
 \subcaption{} \label{fig:compu_ris_stepsize}
 \end{subfigure}

	\hfil
	\vspace{-0.4cm}
	\caption{\footnotesize a) { Secrecy rate versus value of REs resistance ($\Omega$), number of RIS $M=\{50,70\}$, number of antenna at Alice $\Na =\{4,8\}$ and $P=30$ dBm} (b) { Secrecy rate versus RIS elements $(M)$ for different value of step size with $R=2\Omega, P=30\text{dBm}$} (c)  Computational time versus number of RIS elements $(M)$ with $R=2\Omega, P=30\text{dBm}$. }
	\label{fig: Rate_1 }
\end{figure*}


\subsection{Secrecy Rate Versus Reflecting Element Resistance} 

Fig.~\ref{fig:rate_risvalue} illustrates the variation of the secrecy rate with respect to the RE resistance under the considered RIS models for two scenarios: i) $M=50$, $\Na = 4$, and ii) $M=70$, $\Na=8$. For both models, and both scenarios, the secrecy rate monotonically decreases with increasing RE resistance due to the greater resistive losses at the RIS, which reduce the reflection gains. Among the considered methods, the proposed Practical-PGM algorithm consistently achieves the highest secrecy rate and exhibits a relatively slower performance degradation as the resistance increases. It thus demonstrates strong robustness against hardware constraints. On the other hand, the Ideal-PGM and BCD schemes exhibit more severe performance degradations as the RE resistance increases, since neither method accounts for the RE resistive losses in the optimization process. Among all the considered methods, the Random-RIS scheme shows the poorest performance, with the secrecy rate dropping sharply under high RE resistance values, highlighting the inefficiency of unoptimized phase configurations. Increasing the number of RIS elements (from $M=50$ to $ = M=70$) and transmit antennas (from $ N_a=4$ to $ N_a=8$) enhances the secrecy performance across all the schemes due to the stronger beamforming gain. However, the impact of the RE resistance is more pronounced in larger RIS arrays, underscoring the need to account for hardware non-idealities in the design of RIS-assisted secure communication systems.

\subsection{Effects of Step-Size Control and Initialization on PGM }

This section analyzes the impact of the step size parameters, $\upsilon_{\theta}$ and $\upsilon_{t}$, on the performance of the PGM method. Four scenarios are considered: i) the proposed PGM method, which performs step size initialization according to $\upsilon_{\theta}$ and   $\upsilon_{t}$ along with the adaptive control mechanism discussed in Section~\ref{sec:join opt}, ii) basic PGM with fixed $\upsilon_{\theta} =\upsilon_{t}=0.1$, iii) basic PGM with fixed $\upsilon_{\theta} =\upsilon_{t}=0.01$, and iv) basic PGM with fixed $\upsilon_{\theta} =\upsilon_{t}=0.001$. The variations of the secrecy rate and computational time with the number of RIS elements $(M)$ with $R=2\Omega$ are shown in Fig.~\ref{fig:rate_ris_stepsize} and Fig.~\ref{fig:compu_ris_stepsize}, respectively. Owing to its adaptive step-size initialization and control strategy, the proposed PGM method achieves higher secrecy rates and substantially lower computational times than the basic PGM schemes with fixed step sizes. Furthermore, as the fixed step-size decreases, the performance gap between the proposed PGM method and the three PGM variants widens, indicating that a small fixed step size not only degrades the secrecy rate but also increases the computational cost due to slower convergence. Overall, the results confirm that proper step-size initialization combined with adaptive control is crucial for simultaneously enhancing the secrecy performance and reducing the convergence time in RIS-assisted systems.

In addition to addressing the secrecy rate maximization problem, this study also investigates the suboptimal channel power difference maximization (CPDM) formulation presented in~(\ref {prob:CPDM}). In this approach, the optimal values of the precoder are defined using eigen decomposition, as explained in Sec. \ref{sec:CPDM}, and the phase shifts are obtained using Algorithm~\ref{alg:CPDM}. To confirm the efficacy of the proposed method,  the following benchmark techniques were employed:

\begin{itemize}
    \item  PGM: This scenario considers the proposed method, where $\TT^{\text{(init)}}$ is obtained via eigen decomposition, as described in Sec. \ref{sec:CPDM}, and $\thetaa^{\text{(init)}}$ is obtained using Alg.~\ref{alg:CPDM}.
    
    \item DSM: The authors in \cite{DBLP:journals/wcl/SirojuddinPH22} proposed a method called Dimension-wise Sinusoidal Maximization (DSM) to optimize the phase shifts $\thetaa$ of a RIS-aided MIMO system incorporating the ideal RIS model. In DSM, the precoder $\TT$ is obtained using SVD given fixed phase shifts.
    \item SDR: Cui et al.~\cite{DBLP:journals/wcl/CuiZZ19} proposed a semidefinite relaxation method (SDR) followed by Gaussian randomization to design the phase shifts of the RIS by transforming problem (\ref{prob:CPDM}) to a semidefinite problem (SDP) and then using the SDR method to solve it. This method assumes the ideal RIS model. In addition, the precoding matrix $\TT$ is obtained using eigen decomposition, similar to the approach adopted in the present study.
    \item RND: In this scenario, when solving the problem in~(\ref{prob:CPDM}), $\thetaa$ and $\TT$ are randomly selected while satisfying $\trace \left( \TT \TT^H\right) = P$ and $\thetamin \leq \theta_m \leq \thetamax, \forall m$.
\end{itemize}


\begin{figure*}[t!]

 \begin{subfigure}{0.33\textwidth}
 \centering
 \includegraphics[width=0.95\linewidth]{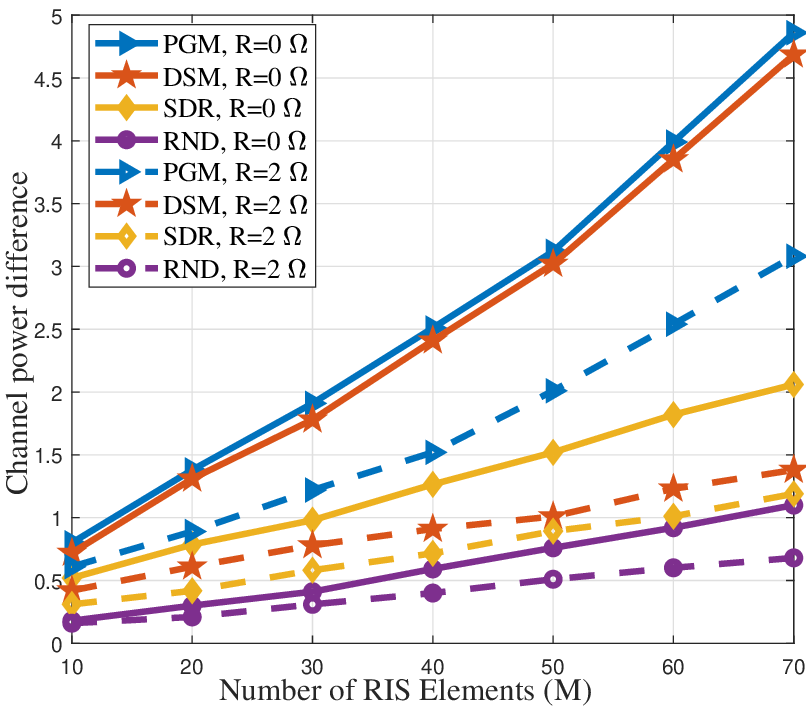}
 \subcaption{} \label{fig:cpdm_ris}
 \end{subfigure}
 \begin{subfigure}{0.35\textwidth}
 \centering
 \includegraphics[width=0.98\linewidth]{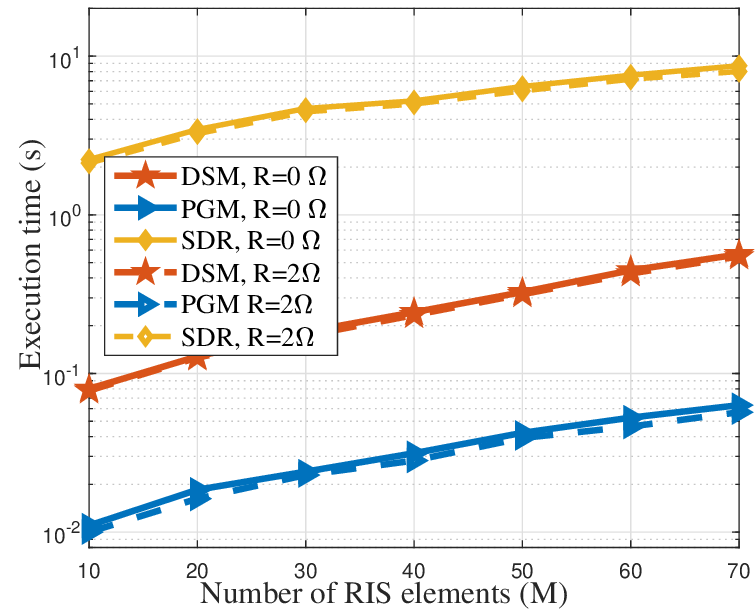}
 \subcaption{} \label{fig:cpdm_runtime}
 \end{subfigure}
 \begin{subfigure}{0.33\textwidth}
 \centering
\includegraphics[width=0.98\linewidth]{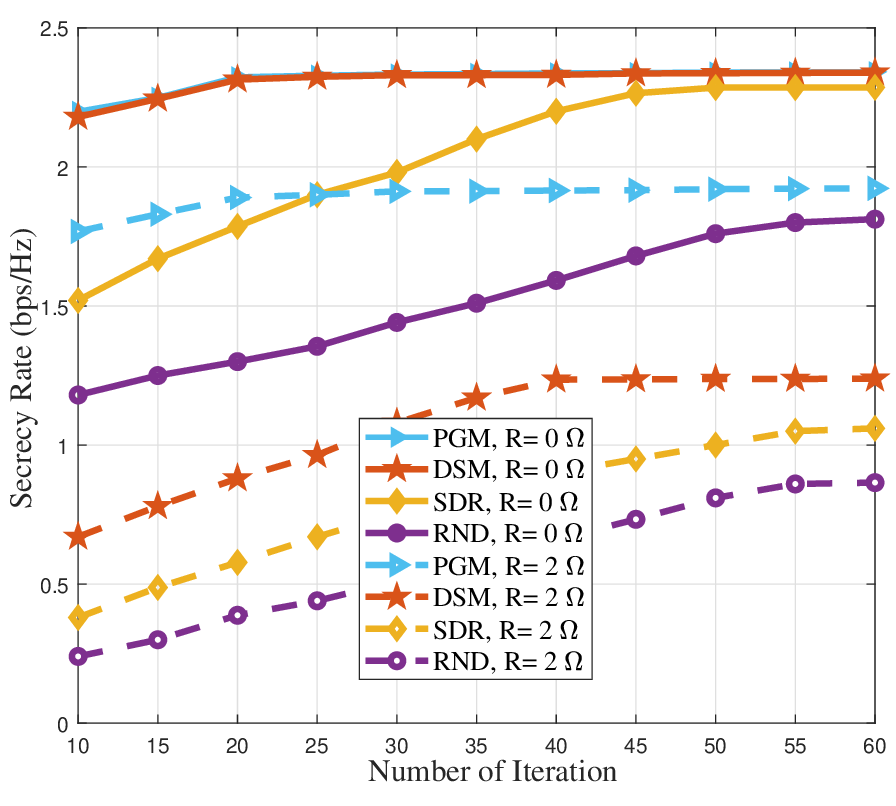}
 \subcaption{} \label{fig:rate_itt}
 \end{subfigure}

	\hfil
	\vspace{-0.4cm}
	\caption{\footnotesize a) { Channel power difference versus number of RIS elements ($M$) for two values of RE: $R=0\Omega$, and $R=2\Omega$ } (b) { Execution time versus number of RIS elements ($M$) for values of RE resistance: $R=0\Omega$, and $R=2\Omega$ } (c) { Convergence secrecy rate to update theta ($\theta$) in Algorithm.~\ref{alg:PGM} under different initials point and two considered values of RE: $R=0\Omega$, and $R=2\Omega$, $M=50$, $P=30$ dBm}.}
	\label{fig: Rate_2 }
\end{figure*}

\begin{table}[t!]
\renewcommand{\arraystretch}{1.25}
    \centering
    \caption{ Overall computational complexity}
      \label{tab:complexity_all1}
   \begin{tabular}{|p{0.31\linewidth}|p{0.58\linewidth}|}
        \hline
        \textbf{Method}  & \textbf{Complexity} \\\hline         
   Proposed \textbf{Algorithm~\ref{alg:CPDM}} & $\mathcal{O} \left( \bar{I}_1 \left( M \Nx^2 + \bar{I}_2 M \Nx^2 \right) \right)$ \\ 
  \hline
  DSM &  $\mathcal{O}  \left( 2 \Nx^2 + \Nx^2 ( \Nx +M) + I_D M(M-1)  \right) $\\
  \hline
   SDR &  $\mathcal{O} \left( I_S \left( \Nx^3 + (M+1)^{3.5}  \right) \right)$\\
  \hline
    \end{tabular}
\end{table}

Fig.~\ref{fig:cpdm_ris} shows the variation of the channel power difference  with the number of RIS elements $(M)$, considering two resistance values of the REs, $R=0 \, \Omega$ and $R=2 \, \Omega$. In the ideal RIS case, the PGM and DSM methods show a similar performance, which aligns with the findings discussed in~\cite{DBLP:IRS-MIMO_capacity_siro}. The SDR approach, however, exhibits a lower channel power difference due to its indirect optimization process that involves relaxing the rank constraint of the matrix in the SDP step. Since the resulting solution does not guarantee a rank-one outcome, an additional Gaussian randomization step is required, which introduces non-stationarity and limits the potential of the SDR method. In the non-ideal RIS scenario, the proposed Algorithm ~\ref{alg:CPDM} outperforms DSM and SDR, highlighting the importance of accounting for the resistance of the REs when designing a sub-optimal phase shift solution to the optimization problem~(\ref{prob:CPDM}). When the resistance increases from $R=0 \, \Omega$ to $R=2\Omega$, all of the schemes exhibit a lower channel power difference, as expected, due to the resistive losses caused by the non-ideal RIS hardware. Finally, the RND method consistently yields the lowest channel power difference for both $R=0 \, \Omega$ and $R=2 \, \Omega$, reinforcing the need for an optimization-based phase shift design to solve the problem~(\ref{prob:CPDM}).

Fig.~\ref{fig:cpdm_runtime} shows the execution times of the PGM, DSM, and SDR methods as a function of the number of RIS elements $(M)$ under two resistance values ($R= \{0,2\}$). PGM provides the fastest computational time of the three methods. However, DSM is faster than SDR. Referring to Table \ref{tab:complexity_all1}, the SDR method incurs the highest per-iteration computational cost, i.e., $\mathcal{O}(M)^{3.5}$; therefore, its execution time increases rapidly as $M$ grows. In contrast, the per-iteration computational costs of the PGM and DSM methods are $\mathcal{O}(M)$ and $\mathcal{O}(M)^2$, respectively. Notably, the execution times for all of the methods are nearly identical for both $R=0 \, \Omega$ and $R=2 \, \Omega$, which indicates that the computational complexities of the three methods are independent of the resistance of the REs. 

In many studies, a sub-optimal criterion is used to generate an initial search point for the main algorithm in an attempt to accelerate the convergence behavior. In this study, the sub-optimal values of $\TT$ and $\thetaa$, obtained by solving the optimization problem~(\ref{prob:CPDM}), were employed as $\TT^{\text{(init)}}$ and $\thetaa^{\text{(init)}}$ in Algorithm~\ref{alg:PGM}, respectively. To explore the relationship between the CPDM method and the secrecy rate maximization objective, Fig. \ref{fig:rate_itt} shows the convergence behavior of Algorithm \ref{alg:PGM} using initial values of $\TT$ and $\thetaa$ obtained from four different methods (PGM, DSM, SDR, and RND) under two scenarios: the ideal RIS case with $R=0 \, \Omega$ and the non-ideal RIS case with $R=2 \, \Omega$. The results show that PGM and DSM yield higher convergence rates than SDR and RND for both values of $R$. In the ideal RIS case, PGM and DSM converge in $20-25$ iterations, whereas SDR and RND converge in $45-50$ and $60$ iterations, respectively. However, when $R=2\, \Omega$, PGM still shows a better starting point and converges in $15-20$ iterations, but DSM converges only after approximately $40-45$ iterations. Hence, PGM not only provides an accurate initial estimate for $\TT$ and $\thetaa$ in Algorithm ~\ref{alg:PGM}, but also exhibits improved robustness under variations in the RE values of the RIS.

\section{Conclusion}
\label{conclusion}

This paper investigated the secrecy rate maximization problem in a practical RIS-assisted MIMO system where the REs possess finite resistance values, which reduce the reflection gains and hence degrade the overall system performance. The inclusion of resistance in the system model couples the design variables and makes the joint optimization of the precoding matrix and RIS phase shifts highly non-convex. To address this challenge, this study proposed a low-complexity PGM framework to efficiently solve the secrecy rate maximization problem. An adaptive step-size initialization and control mechanism were further incorporated into the PGM to enhance the convergence stability of the iteration process while also reducing the computational burden compared to conventional fixed step-size approaches. Additionally, a channel power difference maximization (CPDM) sub-problem was formulated to enable the efficient design of the RIS phase shifts with a lower complexity, making the approach particularly attractive for large-scale RIS-assisted MIMO networks.

Comprehensive simulation results verified the effectiveness of the proposed PGM scheme. Specifically, the PGM-based optimization scheme consistently outperformed conventional algorithms such as BCD and random RIS under both ideal and practical RIS models, with the performance gains being particularly significant in practical settings involving non-zero RE resistance. Moreover, the CPDM-based design achieved an effective trade-off between the secrecy performance and the computational complexity, highlighting its suitability for real-time implementation in large-scale systems.
Overall, the methods and results presented in this study provide new insights into secure transmission design with practical RIS hardware and establish scalable optimization frameworks that can be extended to other advanced communication scenarios, such as RIS-aided ISAC and multi-user secure networks.
In future studies, the proposed framework can be extended to other practical scenarios, including multi-RIS cooperative deployments, mobile user environments, and optimization under imperfect CSI, thereby further enhancing the robustness and applicability of secure RIS-assisted communications in beyond-5G and 6G networks.

\appendices
\section{Derivation of (\ref{eq:dCsec dthetaa})} \label{sec:derivations}
Since $\Csec = \Rb - \Re$, then  $\partial \Csec / \partial \thetaa = \partial \Rb / \partial \thetaa - \partial \Re / \partial \thetaa$. We first compute $\partial \Rb / \partial \thetaa$ in detail. The derivation process for $\partial \Re / \partial \thetaa$ then follows an analogous approach. To begin, we evaluate the scalar $\partial \Rb / \partial \theta_m$ and then generalize the result to obtain the vector form $\partial \Rb / \partial \thetaa$. From (\ref{eq:Cb}), since
\begin{equation*}
    \Rb = \log_2 \det \left( \II_{\Nb} + \Hhatb \TT \TT^H \Hhatb^H / \sigmab \right),
\end{equation*}
with $\Hhatb = \Hab + \Hrb \Phii \Har$, by using the chain rule for derivation, it is easily shown that \begin{align*}
    \frac{\partial \Rb}{\partial \theta_m} &= \frac{1}{\ln 2} \left[ \AAb^{-1} \frac{\partial \AAb}{\partial \theta_m} \right] \\
    &= \frac{1}{\ln 2} \tr \left[ \AAb^{-1} \frac{1}{\sigmab} \left( \frac{\partial \Hhatb}{\partial \theta_m} \TT \TT^H \Hhatb^H + \Hhatb \TT \TT^H \frac{\partial \Hhatb^H}{\partial \theta_m} \right) \right]
\end{align*}
with
\begin{equation*}
    \frac{\partial \Hhatb}{\partial \theta_m} = \Hrb \frac{\partial \Phii}{\partial \theta_m} \Har = \frac{\partial \phi_m}{\partial \theta_m} \Hrb \ee_m \ee_m^T \Har,
\end{equation*}
where $\ee_m \in \mathbb{R}^{M}$ is the standard basis vector whose $m$th element is one and the remaining elements are zero. Since $\left( \partial \Hhatb / \partial \theta_m \right) \TT \TT^H \Hhatb^H$ and $\Hhatb \TT \TT^H \left( \partial \Hhatb^H / \partial \theta_m \right)$ are the conjugate transposes of each other, it follows that
\begin{align*}
    \frac{\partial \Rb}{\partial \theta_m} &= \frac{2}{\ln 2} \Ree \left\{ 
    \frac{\partial \phi_m}{\partial \theta_m} \frac{1}{\sigmab} \tr \left[ \AAb^{-1} \Hrb \ee_m \ee_m^T \Har \TT \TT^H \Hhatb^H \right] \right\} \\
    & \eqtext{(1)} \frac{2}{\ln 2} \Ree \left\{ \frac{\partial \phi_m}{\partial \theta_m} \frac{1}{\sigmab} \ee_m^T \Har \TT \TT^H \Hhatb^H \AAb^{-1} \Hrb \ee_m \right\} \\
    & \eqtext{(2)} \frac{2}{\ln 2} \Ree \left\{ \frac{\partial \phi_m}{\partial \theta_m} \frac{1}{\sigmab} \left[ \Har \TT \TT^H \Hhatb^H \AAb^{-1} \Hrb \right]_{m,m} \right\},
\end{align*}
where equality (1) exploits the fact that the trace operation is invariant under cyclic permutation, and equality (2) is valid since $\ee_m^T \XX \ee_n$ means taking an element of matrix $\XX$ in row $m$ column $n$. The value of vector $\partial \Rb / \partial \thetaa$ is then given as
\begin{equation*}
    \frac{\partial \Rb}{\partial \thetaa} = \frac{2}{\ln 2} \Ree \left\{ \frac{\partial \phii}{\partial \thetaa} \circ \diag \left( \frac{1}{\sigmab} \Har \TT \TT^H \Hhatb^H \AAb^{-1} \Hrb \right) \right\}.
\end{equation*}
Adopting a similar approach, the value of $\partial \Re / \partial \thetaa$ is expressed as
\begin{equation*}
    \frac{\partial \Re}{\partial \thetaa} = \frac{2}{\ln 2} \Ree \left\{ \frac{\partial \phii}{\partial \thetaa} \circ \diag \left( \frac{1}{\sigmae} \Har \TT \TT^H \Hhate^H \AAe^{-1} \Hre \right) \right\}.
\end{equation*}
By combining $\partial \Rb / \partial \thetaa$ and $\partial \Re / \partial \thetaa$, the result in (\ref{eq:dCsec dthetaa}) is obtained.

\section{Proof of Proposition \ref{prop:CPDM}} \label{sec:CPDM proof}
Let us denote the eigen-decompositions of $\FFb$ and $\FFe$ as $\FFb = \UUb \BB \UUb^H$ and $\FFe = \UUe \EE \UUe^H$, respectively. It then follows that
\begin{align*}
    \Csec &= \log_2 \det \left( \II_{\Ns} + \BB \right) - \log_2 \det \left( \II_{\Ns} + \EE \right) \\
    &= \frac{1}{\ln 2} \left\{ \sum_{n=1}^{\Ns} \left[ \ln \left( 1 + b_n \right) - \ln \left( 1 + \epsilon_{n} \right) \right] \right\},
\end{align*}
where $b_n \geq 0$ and $\epsilon_n \geq 0$, $\forall n \in \mathcal{N}_{\text{s}} = \{1, \dots, \Ns \}$, are the eigenvalues of $\BB$ and $\EE$ arranged in non-decreasing order, respectively.
By using the inequality $\ln \left( 1 + x_n \right) \leq x_n$ for $x_n \geq 0$, it is readily obtained that
\begin{align*}
    \Csec & \leqtext{(1)} \frac{1}{\ln 2} \sum_{n=1}^{\Ns} \left( b_n - \epsilon_n \right) = \frac{1}{\ln 2} \trace \left( \FFb - \FFe \right) = \frac{1}{\ln 2} \Pdiff,
\end{align*}
where inequality (1) is valid due to the following two facts. First, since $\FFb - \FFe$ is positive semidefinite (PSD), then $b_n \geq \epsilon_n$ based on the Courant-Fischer theorem \cite{Horn90Matrix}. Second, the inequality $\ln \left(1 + b_n \right) - \ln \left( 1 + \epsilon_n \right) \leq b_n - \epsilon_n$, or equivalently, $b_n - \ln \left( 1 + b_n \right) \geq \epsilon_n - \ln \left( 1 + \epsilon_n \right)$, is true by proving that $\zeta (x) = x - \ln (1+x)$ is increasing for $x \geq 0$. It is obvious that $\zeta'(x) = 1 - 1/(1+x) = x/(x+1) \geq 0$ is non-negative, so $\zeta(x)$ in increasing for $x \geq 0$. This proves the first sentence in Proposition \ref{prop:CPDM}.

By formulating the Taylor expansion of the log-determinant of a PSD matrix $\XX$ as $\ln \det (\II + \XX) = \trace (\XX) - (1/2) \trace (\XX^2) + (1/3) \trace (\XX^3) + \dots$, it is readily shown that
\begin{align*}
    \Csec = \frac{1}{\ln 2} \bigg[ \trace \left( \FFb - \FFe \right) - \frac{1}{2}  \trace \left( \FFb^2 - \FFe^2 \right) + \dots \bigg].
\end{align*}
If $b_n$ and $\epsilon_n$, $\forall n$, are small, the linear terms dominate and the higher-order (negative) corrections are small. Therefore,
\begin{align*}
    \Csec \approx \frac{1}{\ln 2} \trace \left( \FFb - \FFe \right).
\end{align*}
This proves the second sentence in Proposition \ref{prop:CPDM}.

\bibliographystyle{IEEEtran}

\bibliography{IEEEabrv,IRS_MIMO_references}

\end {document}